\newtheorem{theorem}{Theorem}
\newtheorem{conjecture}{Conjecture}
\newtheorem{lemma}{Lemma}
\newtheorem{proposition}{Proposition}
\theoremstyle{definition}
\theoremstyle{plane}
\def \beq{ \begin{equation} }
\def \eeq{\end{equation}}
\begin{document}
\title{Saari's Homographic Conjecture of the Three-Body Problem}

\renewcommand\baselinestretch{}
\authors{Florin Diacu$^{*}$, Toshiaki Fujiwara$^{**}$, Ernesto P\'erez-Chavela$^{\dag}$, and Manuele
Santoprete$^{\dag\dag}$}

\address{\noindent $^{*}$ Department of Mathematics and Statistics,
University of Victoria, Victoria, BC, Canada}
\address{\noindent $^{**}$ College of Liberal Arts and Sciences,
Kitasato University, 1-15-1 Kitasato, Sagamihara, Kanagawa 228-8555, Japan}
\address{\noindent $^{\dag}$ Departamento de Matem\'aticas, UAM--Iztapalapa, A.P. 55--534,
09340 Iztapalapa, Mexico, D.F., Mexico}
\address{\noindent $^{\dag\dag}$ Department of Mathematics,
 Wilfrid Laurier University, Waterloo, ON, Canada}

\email{diacu@math.uvic.ca,
fujiwara@clas.kitasato-u.ac.jp, \\
epc@xanum.uam.mx, msantoprete@wlu.ca}

\subjclass{70F10, 70H05}

\keywords{three-body problem, homographic solutions, central configurations}

\begin{abstract}
Saari's homographic conjecture, which extends a classical statement proposed by Donald Saari in 1970, claims that solutions of the Newtonian $n$-body problem with constant configurational measure are homographic. In other words, if the mutual distances satisfy a certain relationship, the configuration of the particle system may change size and position but not shape. We  prove this conjecture for large sets of initial conditions in three-body problems given by homogeneous potentials, including the Newtonian one. Some of our results are true for $n\ge 3$.
\date{\large{\today}}
\end{abstract}

\maketitle

\markboth{F. Diacu, T. Fujiwara, E. P\'erez-Chavela, and M. Santoprete}{Saari's Homographic Conjecture}

%%%%%%%%%%%%%%%%%%%
%%%%%%%%%%%%%%%%%%%
%%%%%%%%%%%%%%%%%%%
\section{Introduction}
%%%%%%%%%%%%%%%%%%%
%%%%%%%%%%%%%%%%%%%
%%%%%%%%%%%%%%%%%%%
%Saari's conjecture is notoriously difficult. Donald Saari proposed it in
In 1970  Donald Saari proposed the following conjecture \cite{saari}: \emph{In the Newtonian $n$-body problem, if the moment of inertia, $I=\sum_{k=1}^nm_k|q_k|^2$, is constant, where $q_1, q_2,\dots, q_n$ represent the position vectors of the bodies of masses $m_1,\dots, m_n$, then the corresponding solution is a relative equilibrium}. In other words: Newtonian particle systems of constant moment of inertia rotate like rigid bodies.

A lot of energy has been spent to understand Saari's conjecture, but most of this work failed to achieve crucial results. An early attempt at a proof using variational techniques was, unfortunately, incorrect \cite{palmore1, palmore2}. More recently, the interest in this conjecture has grown considerably due to the discovery of the ``figure eight'' solution \cite{eight}, which---as numerical arguments show---has an approximately constant moment of inertia but is not a relative equilibrium.

Still, there have been a few successes in the struggle to understand Saari's conjecture. McCord proved that the conjecture is true for three bodies of equal masses \cite{mccord}. Llibre and Pi\~na gave an alternative proof of this case, but they never published it \cite{llibre}. Moeckel obtained a computer-assisted proof for the Newtonian three-body problem for any values of the masses \cite{moeckel, moeckel2005}. Diacu, P\'erez-Chavela, and Santoprete showed that the conjecture is true for any $n$ in the collinear case for potentials that depend only on the mutual distances between point masses \cite{collinear}. There have also been results, such as \cite{chen97, ffoLem, ffoI, roberts, santoprete, schmah}, which consider the conjecture in other contexts than the Newtonian one.

%Even more difficult than Saari's conjecture is Saari's homographic conjecture\footnote{Donald Saari proposed this extension of his original conjecture during an AMS-SIAM series of CMBS lectures he gave in June 2002 in Illinois.  }
%%%%%%%%%%
%%%%%%%%%%
A natural extension of the original Saari's conjecture, namely Saari's homographic conjecture, was proposed by Donald Saari  during an AMS-SIAM series of CMBS lectures he gave in June 2002 in Illinois. This extended conjecture was then emphasized  by Donald Saari during his talk at the Saarifest2005 and in his book \cite{saari-book}. In this extended conjecture  the role of the moment of inertia is played by the so called configurational measure. 

In the Newtonian $n$-body case  $UI^{1/2}$ defines the configurational measure of the particle system (also called scaled potential), where $U$ is the Newtonian potential. Saari's homographic conjecture claims that \emph{every solution of constant configurational measure is homographic} (see Section \ref{notations} for more details). In particular, if the moment of inertia is constant, it can be shown that the potential $U$ is constant, therefore the configurational measure is constant and every homographic solution with constant moment of inertia is a relative equilibrium. This reasoning shows why Saari's conjecture is a particular case of Saari's homographic conjecture. 

Saari's homographic conjecture covers new territory. While in Saari's conjecture collisions are excluded (because they lead to an unbounded potential, therefore to a non-constant moment of inertia) and the motion remains bounded (because the moment of inertia is constant), both collision and unbounded orbits may occur in Saari's homographic conjecture.

Moreover, Saari's homographic conjecture is important for astronomy. Homographic solutions are known to exist in the solar system. It is often said that the Sun, Jupiter, and the Trojan cluster of asteroids move like a relative equilibrium solution of the three-body problem, but this fact is true only in a first approximation. Jupiter orbits an ellipse around the Sun, although this ellipse is almost a circle. Therefore the motion of the system formed by the Sun, Jupiter, and the Trojan asteroids is better described as a homographic solution than as a relative equilibrium of the three-body problem.

In this paper we are primarily interested in Saari's homographic conjecture of three-body problems given by homogeneous potentials, but some of our results are also true for any number $n\ge3$ of bodies. The main results of this paper are given in Theorems 1 through 8. 

Theorem 1 shows that, for homogeneous potentials of order $-a$, with $a<2$, Saari's homographic conjecture is true for any total-collision solution of the planar or spatial $n$-body problem. Theorem 2 validates Saari's homographic conjecture for $0<a<2$ for any type of collision in the $n$-body case. Theorem 3 shows that Saari's homographic conjecture is correct in the rectilinear case for $0<a<2$. Theorem 4 shows Saari's homographic conjecture to be always valid in the collinear case. Theorem 5 proves that, for $0<a<2$, Saari's homographic conjecture is true in the three-body problem if the solutions stay away from the paths that make them scatter asymptotically towards rectilinear central configurations. Theorem 6 proves that Saari's homographic conjecture is correct in the Newtonian three-body problem with equal masses and non-negative energy. Theorem 7 shows that for any given initial configuration of three bodies, Saari's homographic conjecture is valid if the chosen angular momentum is large enough. Finally, Theorem 8 proves that if the angular momentum is chosen first, then Saari's homographic conjecture is true if the initial positions are taken close enough to an equilateral triangle of a certain size.   

The key tool for obtaining these results is provided by what we call Fujiwara coordinates, which were introduced by one of us. The motivation behind defining them is explained in Section \ref{secDecomposition}. 

The paper is structured as follows. Section \ref{notations} introduces some notations and definitions and gives a precise statement of Saari's homographic conjecture. Section \ref{secEvolutionOfI} deals with the evolution of the moment of inertia, which is then used in Section \ref{secCollisionOrbit} to prove Theorems 1, 2, and 3. Besides introducing the Fujiwara coordinates, Section \ref{secDecomposition} proves a lemma that characterises the homographic solutions relative to the new Fujiwara variables and proves Theorem 4. Section \ref{kineticSection} studies the evolution of the triangle's shape formed by three bodies and the connection of this shape with central configurations and homographic solutions. In Section \ref{candidate} we characterise the non-homographic solutions that would contradict Saari's homographic conjecture. This characterisation is used in the next sections towards showing that non-homographic solutions are impossible. Section \ref{condition} derives a condition that must be satisfied by the non-homographic candidate. This condition is crucial for proving the other theorems. Sections \ref{one} and \ref{two}, which prove Theorems 5 and 6, are separated by Section \ref{break}, whose goal is to study the analytic behaviour of the solutions near rectilinear central configurations. Finally, Section \ref{final} clarifies Theorems 7 and 8.

%%%%%%%%%%%%%%%%%%%
%%%%%%%%%%%%%%%%%%%
%%%%%%%%%%%%%%%%%%%
\section{Notations and definitions}
\label{notations}
%%%%%%%%%%%%%%%%%%%
%%%%%%%%%%%%%%%%%%%
%%%%%%%%%%%%%%%%%%%
In this paper we consider point-mass problems given 
by {\it homogeneous potential functions} of the form
\[
U = \frac{1}{a} \sum \frac{m_j m_k}{r_{jk}^a},
\]
where $a>0$ is a constant, $m_k$ are the masses, $r_{jk}=|q_j-q_k|$ define the mutual distances between bodies, and $q_k$ represent the coordinates of the point masses. (The terms {\it point mass}, {\it body}, and {\it particle} denote the same concept. Similarly, {\it solution} and {\it orbit} describe identical mathematical objects.) We are going to study the equations of motion given by the system of differential equations
\[
\begin{split}
&m_k \frac{dq_k}{dt}=p_k,\\
&\frac{dp_k}{dt}
	=\sum_{j\ne k} \frac{m_j m_k}{r_{jk}^{a+2}}(q_j-q_k)
	=g_k(q),
\end{split}
\]
where $q$ is a generic notation for the variables $q_k$. The {\it kinetic energy}, $T$, and the {\it total energy}, $H$, are defined as
\[
\begin{split}
T=\frac{1}{2}\sum \frac{|p_k|^2}{m_k}\ {\rm and}\
H=T-U,\ {\rm respectively.}
\end{split}
\]
Without loss of generality, we can fix the centre of mass of the particle system at the origin of the frame by taking
\[
\sum m_k q_k=0.
\]
We define the {\it moment of inertia},  $I$,  and the {\it angular momentum}, $C$, as
\[
I 
= \sum m_k |q_k|^2
=M^{-1}\sum m_j m_k |q_j-q_k|^2\ {\rm and}\ 
C = \sum q_k \wedge p_k,
\]
where
$M=\sum m_k$ is the {\it total mass} and $a\wedge b$ represents the {\it outer product} of the vectors $a$ and $b$. We use $a \cdot b$ for the {\it inner product}.

Notice that the moment of inertia, $I$, and the potential, $U$,
are homogeneous functions of degree $2$ and $-a$, respectively, so both functions are inversely proportional in the sense that one increases while the other one decreases. This observation leads to the notion of {\it configurational measure}, $UI^{a/2}$, which is a homogeneous function of degree zero.

A solution of our problem is called {\it homographic} if
the configuration of the particles remains similar with itself for all times. In other words, there exists a scalar $R=R(t)>0$
and an orthogonal matrix $\Omega=\Omega(t)$ such that, for all t,
$$q_k(t)=R(t)\Omega(t)q_k(t_0),$$
where $t_0$ is some fixed moment in time.

In particular, a solution is called homothetic if dilation/contraction occurs without rotation. This situation happens when $\Omega$ is the unit matrix. Similarly, a solution is called a relative equilibrium if rotation takes place without dilation/contraction. This scenario appears when $R=1$.

Our goal here is to shed light on the following conjecture. Most of our results will be restricted to the three-body case.

\begin{conjecture}[Saari's homographic]
\label{conjecture1}
If in the $n$-body problem given by a homogeneous potential of degree $-a$, with $a>0$, the configurational measure is constant, then the corresponding solution is homographic.
\end{conjecture}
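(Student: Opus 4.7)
The plan is to convert constancy of the configurational measure into a closed ODE for the ``size'' $I$, and then to argue that this determinism leaves no room for the ``shape'' to evolve. Because $U$ is homogeneous of degree $-a$, Euler's theorem gives $\sum q_k\cdot g_k = -aU$, so the Lagrange--Jacobi identity reads
\[
\ddot I = 4T - 2aU = 4H + 2(2-a)U.
\]
Under the hypothesis $UI^{a/2}\equiv C_0$, one substitutes $U = C_0\,I^{-a/2}$ and obtains the autonomous second-order ODE
\[
\ddot I = 4H + 2(2-a)\,C_0\,I^{-a/2},
\]
so the history $I(t)$ is determined by $H$, $C_0$, $I(0)$, and $\dot I(0)$ alone, independently of how the shape evolves.

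The second step is to introduce a size/shape/rotation splitting along the lines of the Fujiwara coordinates advertised in Section \ref{secDecomposition}: write $q_k(t) = R(t)\,\Omega(t)\,\xi_k(t)$, with $R=\sqrt{I/I_0}$, $\Omega$ orthogonal, and $(\xi_k)$ normalised by $\sum m_k|\xi_k|^2 = I_0$ together with a body-frame gauge that absorbs all rotation into $\Omega$. The kinetic energy then splits additively into a radial piece in $\dot R$, a rigid-rotation piece controlled by the angular momentum $C$, and an internal ``shape'' piece involving the $\dot\xi_k$. Since the configurational measure is a function of the shape alone, $UI^{a/2}\equiv C_0$ confines $(\xi_k(t))$ to a level set of the scaled potential $\widetilde U(\xi):=U(\xi)\,I(\xi)^{a/2}$ on the shape manifold, and a solution is homographic precisely when the shape is frozen. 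The conjecture thus reduces to showing that the shape trajectory inside such a level set collapses to a single point.

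The decisive step, and the main obstacle, is ruling out any non-trivial drift of the shape along a non-critical level set of $\widetilde U$. I would differentiate $\widetilde U(\xi(t))=C_0$ twice, substitute the shape equations of motion and the already-determined $I(t)$, and use conservation of $H$ and $C$ to extract an identity that ties the shape kinetic energy to $I$, $\dot I$, and $C$. One then hopes to conclude that either $\dot\xi_k\equiv 0$, or $\xi$ sits at a critical point of $\widetilde U$---i.e., a central configuration---so that the motion is homothetic, a relative equilibrium, or more generally homographic. This last step does not appear to admit a uniform treatment: the extractable identity behaves very differently according to whether the orbit experiences total or partial collision, whether the configuration degenerates to a collinear or rectilinear one, and whether the angular momentum and energy are large enough to pin the shape down. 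This is presumably why the paper carves the conjecture into Theorems 1--8 rather than attacking it in one stroke, and my plan would likewise specialise to each of those regimes in turn, using the characterisation of the hypothetical non-homographic candidate announced in Section \ref{candidate} and the necessary condition derived in Section \ref{condition} as the unifying bridge between them.
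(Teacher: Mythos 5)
You were asked to prove a statement that the paper itself labels a \emph{conjecture} and does not prove: the paper establishes it only in special regimes (Theorems 1--8), and the general case is left open. So there is no complete proof in the paper against which your attempt can be checked. With that caveat, your outline reproduces the paper's actual strategy faithfully: the Lagrange--Jacobi identity closed into an autonomous ODE for $I$ by substituting $U=\mu I^{-a/2}$ (Section \ref{secEvolutionOfI}); the size/shape/rotation splitting with an additive kinetic-energy decomposition, which is exactly what the Fujiwara coordinates deliver in (\ref{decompositionOfKineticEnergy}); and the reduction of the conjecture to showing that the shape cannot drift along a level set of the scaled potential, which is the content of Sections \ref{candidate} and \ref{condition}.

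The genuine gap is exactly where you place it, but it is worth saying precisely why the differentiation you propose cannot close it. For three bodies the level set $\{U(Q)=\mu\}$ on the shape sphere is generically a one-dimensional curve, and differentiating the constraint along the flow does \emph{not} force $\dot\xi_k\equiv 0$ or criticality: it only forces the shape velocity to be tangent to the level set with the magnitude prescribed by (\ref{kineticEnergyForShapeOriginal}), namely $\sum m_k|dQ_k/dt|^2=(B-C^2)/I^2$, which is perfectly consistent with a nontrivial drift. The paper's real leverage comes one derivative later: the candidate momenta are pinned down to the form (\ref{theCandidate}), and comparing their $\tau$-derivative with the equations of motion yields the necessary condition (\ref{theCondition2}); even then, that condition can only be shown to fail in special regimes --- total collisions, rectilinear motion, non-negative energy (with equal masses in the Newtonian case), large $B-C^2$, or near-equilateral initial data. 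Your plan as written stalls at the same place the paper does, and you say so yourself; the honest assessment is that you have outlined the known partial results and the shared obstruction, not a proof of the conjecture.
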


%%%%%%%%%%%%%%%%%%%
%%%%%%%%%%%%%%%%%%%
%%%%%%%%%%%%%%%%%%%
\section{Evolution of the moment of inertia}
\label{secEvolutionOfI}
%%%%%%%%%%%%%%%%%%%
%%%%%%%%%%%%%%%%%%%
%%%%%%%%%%%%%%%%%%%
In this section, we consider the evolution of the moment of inertia
for constant configurational measure in the general planar, $\mathbb{R}^2$,  or spatial, $\mathbb{R}^3$,  $n$-body problem.
%%%%%%%%%
So, unless otherwise specified, we assume the configurational measure constant, and write
\[
\mu = U I^{a/2}.
\]
The Lagrange-Jacobi identity yields
\[
\frac{d^2 I}{dt^2}
=2\sum \frac{|p_k|^2}{m_k} - 2a U
=4H+2(2-a)U
=4H + 2(2-a)\mu I^{-a/2},
\]
where $H$ is the total energy. In the derivation of this identity we used the fact that
$
U
=\left(U I^{a/2}\right)I^{-a/2}
=\mu I^{-a/2}.
$
Integrating the Lagrange-Jacobi identity for $a\ne 2$,
we obtain
\[
\frac{1}{2}
	\left(
		\frac{dI}{dt}
	\right)^2
	+
	\left(
		-4HI - 4\mu I^{(2-a)/2}
	\right)
	= -2B,
\]
where $B$ is an integration constant. Let us write
\begin{equation}
\label{eqPhi}
\Phi(I)
=-4HI - 4\mu I^{(2-a)/2}.
\end{equation}
Then the evolution of the moment of inertia for $\mu=$ constant is described by the equation
\begin{equation}
\label{eqMotForI}
\frac{d^2 I}{dt^2}
= -\frac{\partial\Phi(I)}{\partial I},
\end{equation}
which leads to the first integral
\begin{equation}
\label{eqIntegralForI}
\frac{1}{2}
	\left(
		\frac{dI}{dt}
	\right)^2
	+
	\Phi(I)
	= -2B.
\end{equation}
Notice that the physical region of $\Phi(I)$ is $\Phi(I)\le 0$
because
\begin{equation}
\label{valueOfPhi}
\Phi(I)=-4HI-4UI=-4I(H+U)=-4IT.
\end{equation}
Moreover, the constant $B$ must be positive or zero. To see this, combine equations (\ref{eqIntegralForI}) and (\ref{valueOfPhi}) to obtain
\begin{equation}
\label{4ITMinus2B}
4IT-2B
=\frac{1}{2}\left( \frac{dI}{dt} \right)^2
=2 \left( \sum \sqrt{m_k}\;q_k \cdot {p_k\over\sqrt{m_k}} \right)^2.
\end{equation}
The application of the Cauchy-Schwartz inequality,
\begin{equation}
\label{CauchySchwartz}
\left( \sum \sqrt{m_k}\;q_k \cdot {p_k\over\sqrt{m_k}} \right)^2
\le 
	\left( \sum m_k |q_k|^2 \right)
	\left( \sum \frac{|p_k|^2}{m_k} \right)
	=2IT,
\end{equation}
 to the right hand side of equation (\ref{4ITMinus2B}) leads to
\[
4IT-2B \le 4IT,
\]
therefore $B \ge 0$.
Notice that the equality in the Cauchy-Schwartz inequality (\ref{CauchySchwartz}), and therefore the value $B=0$,
take place if and only if there exists a scalar $\lambda$, which may depend on time, such that $p_k/\sqrt{m_k} = \lambda \sqrt{m_k}\;q_k$, a relationship equivalent to
\begin{equation}
\label{eqForExpansiveMotion}
p_k = \lambda m_kq_k.
\end{equation}

We can now prove the following result.

\begin{lemma}
\label{lemmaMotionWithBeqZero}
For any solution of constant configurational measure of the planar, $\mathbb{R}^2$, or spatial, $\mathbb{R}^3$, $n$-body problem,
$B=0$ if and only if the motion is homothetic.
\end{lemma}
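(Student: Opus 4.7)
The plan is to leverage the equivalence already established just before the lemma, namely that $B=0$ is equivalent, via the Cauchy–Schwartz equality case, to the existence of a scalar function $\lambda(t)$ such that $p_k = \lambda m_k q_k$ for every $k$ (equation (\ref{eqForExpansiveMotion})). Once this is taken as given, the lemma reduces to showing that condition (\ref{eqForExpansiveMotion}) is equivalent to homothetic motion, i.e.\ to the existence of a scalar $R(t)>0$ with $q_k(t)=R(t)q_k(t_0)$ for all $k$ and all $t$.

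For the ``homothetic $\Rightarrow$ (\ref{eqForExpansiveMotion})'' direction, I would simply differentiate the defining relation. If $q_k(t)=R(t)q_k(t_0)$ with $R>0$, then $\dot q_k = \dot R\, q_k(t_0) = (\dot R/R)\,q_k(t)$, so $p_k = m_k \dot q_k = \lambda m_k q_k$ with $\lambda(t):=\dot R(t)/R(t)$. Note that the \emph{same} $\lambda$ works simultaneously for every index $k$, which is exactly the form required by (\ref{eqForExpansiveMotion}).

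For the converse, suppose (\ref{eqForExpansiveMotion}) holds. Dividing by $m_k$ gives $\dot q_k(t) = \lambda(t)\,q_k(t)$, which is a decoupled linear ODE in each component of each $q_k$, all driven by the \emph{same} scalar function $\lambda$. Integrating, each coordinate of $q_k$ evolves by the multiplicative factor
\[
R(t) := \exp\!\left(\int_{t_0}^{t}\lambda(s)\,ds\right),
\]
so $q_k(t)=R(t)\,q_k(t_0)$ for every $k$, and $R(t)>0$ by construction. This is precisely the definition of a homothetic motion (corresponding to $\Omega\equiv\mathrm{Id}$ in the general homographic ansatz).

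I do not anticipate a serious obstacle: the heavy lifting — deriving the first integral (\ref{eqIntegralForI}), rewriting $\Phi(I)$ as $-4IT$, and identifying $B=0$ with the Cauchy–Schwartz equality case — has already been carried out in the excerpt. The only point that deserves care is the uniformity of $\lambda$ across all bodies in the Cauchy–Schwartz equality case; this is what makes the ODE $\dot q_k=\lambda q_k$ integrate to a \emph{single} common scaling factor $R(t)$, producing a genuine homothety rather than body-by-body rescalings.
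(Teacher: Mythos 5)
Your proposal is correct and follows essentially the same route as the paper: both reduce the lemma to the equivalence between $B=0$ and relation (\ref{eqForExpansiveMotion}) established via the Cauchy--Schwartz equality case, and then integrate $\dot q_k=\lambda q_k$ to the common scaling factor $\exp\bigl(\int\lambda\,dt\bigr)$, with the converse obtained by differentiating the homothetic ansatz. Your added remark on the uniformity of $\lambda$ across all bodies is a useful clarification of a point the paper leaves implicit, but it does not change the argument.
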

\begin{proof}
As shown above, if $B=0$, relation (\ref{eqForExpansiveMotion}) takes place. Integrating equation (\ref{eqForExpansiveMotion}), we obtain
\begin{equation}
\label{eqSimultaneousDilatation}
q_i(t)=\exp\left( \int_0^t \lambda dt \right) q_i(0),
\end{equation}
which means that the motion is homothetic.
Conversely, if the motion is homothetic, relation (\ref{eqForExpansiveMotion}) takes place, so $B=0$.
\end{proof}

For the Newtonian case, $a=1$,
we have $\Phi(I)=-4HI -4\mu \sqrt{I}$
with $\mu > 0$ (see Figure \ref{figPhi}).
\begin{figure}[htbp] %  figure placement: here, top, bottom, or page
   \centering
    \includegraphics[width=2.3in]{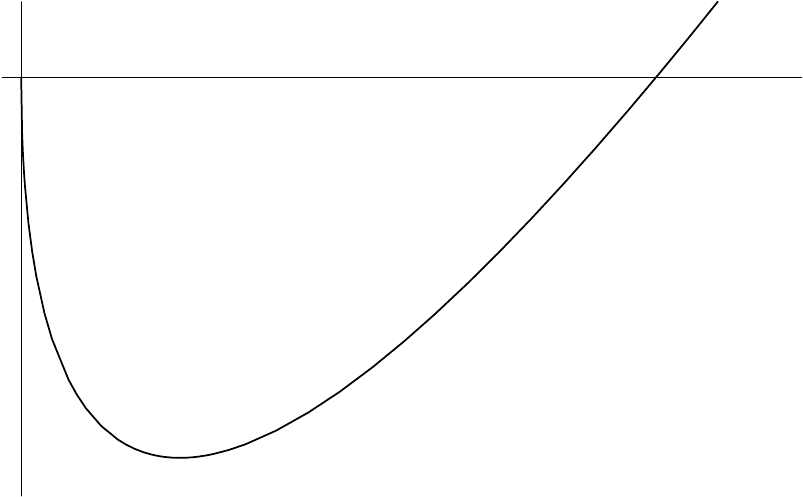}
    \includegraphics[width=2.3in]{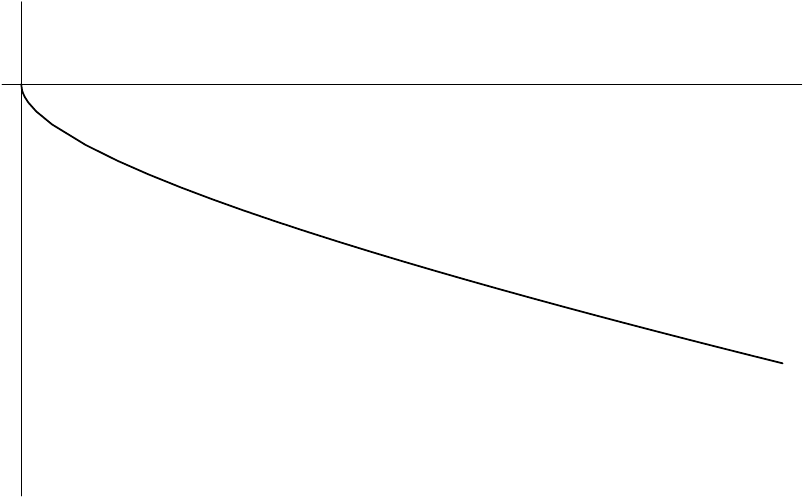} 
   \caption{The graph of $\Phi(I)$ in the Newtonian case
with $H<0$ (left) and $H \ge 0$ (right). The qualitative behaviour of $\Phi(I)$ is similar for all values of $a$ with $0 < a < 2$ but very different for $a=2$ or $a>2$.}
   \label{figPhi}
\end{figure}

Note that the mutual distances, $r_{jk}$, of orbits with $\mu=$ constant  and $a>0$ satisfy the inequality
\begin{equation}
\label{inequality}
\left(\frac{m_j m_k}{a\mu}\right)^{2/a} I
\le r_{jk}^2 \le
\frac{M}{m_j m_k}I,
%\label{ineq}
\end{equation}
which is derived from the relationships $m_j m_k r_{jk}^2\le M I $
and $a\mu I^{-a/2} = aU \ge m_j m_k/r_{jk}^a$.
%%%%

We can now prove the following lemma.
\begin{lemma}
\label{collisionIsTotal}
If $\mu=\mbox{constant}$ and $a>0$, then every collision is a total collision.
\end{lemma}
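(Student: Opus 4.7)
The plan is to exploit the two-sided sandwich inequality (\ref{inequality}), which under the hypothesis $\mu = \text{constant}$ couples every individual mutual distance to the single quantity $I$. The inequality reads
\[
\left(\frac{m_j m_k}{a\mu}\right)^{2/a} I
\le r_{jk}^2 \le
\frac{M}{m_j m_k}\,I,
\]
and the constants sandwiching $r_{jk}^2/I$ are strictly positive and finite (note $\mu>0$ since $U>0$ and $I>0$ prior to any collision). This means $I$ and each $r_{jk}^2$ are comparable to within multiplicative constants that depend only on the masses and on $\mu$.

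Assume a collision occurs at a finite time $t^*$: some particular mutual distance, say $r_{j_0 k_0}(t)$, tends to $0$ as $t\to t^*$. I would first apply the left-hand inequality to this pair, which gives
\[
I(t) \le \left(\frac{a\mu}{m_{j_0} m_{k_0}}\right)^{2/a} r_{j_0 k_0}(t)^2 \longrightarrow 0
\quad \text{as}\ t\to t^*.
\]
Thus the moment of inertia itself must vanish at the collision instant.

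Next I would feed the conclusion $I(t)\to 0$ back into the right-hand inequality, applied now to an \emph{arbitrary} pair $(j,k)$:
\[
r_{jk}(t)^2 \le \frac{M}{m_j m_k}\, I(t) \longrightarrow 0.
\]
Hence every mutual distance tends to $0$ at $t^*$, which is precisely the definition of a total collision, completing the proof.

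There is essentially no obstacle here: the whole argument is a two-line consequence of inequality (\ref{inequality}) once one notices its symmetric role played by $I$. The only point requiring a brief remark is that $\mu>0$, so that the lower bound is not vacuous; this is automatic because $U>0$ and $I>0$ along any non-collision segment of the solution, and $\mu$ is assumed constant along the entire orbit.
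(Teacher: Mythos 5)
Your proof is correct and is essentially the paper's own argument: both deduce $I\to 0$ from the left-hand side of inequality (\ref{inequality}) applied to the colliding pair, and then conclude that all mutual distances vanish from the right-hand side. The paper states this more tersely, but the logic is identical; your remark that $\mu>0$ is a harmless extra observation.
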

\begin{proof}
At any collision, at least one of the mutual distances, $r_{jk}$, must tend to zero. Then, according to inequality (\ref{inequality}), the moment of inertia must tend to zero ($I \to 0$), a fact that implies total collision.
\end{proof}
Similarly, if one of the mutual distances, $r_{jk}$, tends to infinity,
inequality (\ref{inequality}) implies that $I$ and all mutual distances tend to infinity.

We can now categorize the orbits with $\mu=$ constant and $0<a<2$ as follows:
\begin{itemize}
\item[(a)] $B=0$: The forward orbit or the backward orbit has a total collision, so $I \to 0$.
	We will treat this case in Section \ref{secCollisionOrbit}.
\item[(b)] $B>0$ and $H\ge 0$:
	There is a solution, $I_{\text{min}}$, of $\Phi(I)= -2B$ for which $dI/dt=0$. According to (\ref{inequality}), 
	$r_{jk}^2 \ge \left(\frac{m_j m_k}{a\mu}\right)^{2/a} I_{\text{min}}$.
	For $t \to \infty$, $I \to \infty$ and $r_{jk} \to \infty$.
	We will treat this case in Sections \ref{one} and \ref{two}.
\item[(c)] $B>0$ and $H< 0$:
	There are two solutions, $I_{\text{min}}$ and $I_{\text{max}}$, of $\Phi(I)= -2B$. The moment of inertia, $I$, oscillates between these two values. By (\ref{inequality}), the mutual distances are bounded from above and below,
	$\left(\frac{m_j m_k}{a\mu}\right)^{2/a}I_{\text{min}}
	\le r_{jk}^2 
	\le \frac{M}{m_j m_k}I_{\text{max}}$.
	This case will be treated in Section \ref{final}.
\end{itemize}

%%%%%%%%%%%%%%%%%%%%
%%%%%%%%%%%%%%%%%%%%
%%%%%%%%%%%%%%%%%%%%%
%%%%%%%%%%%%%%%%%%%%%
%%%%%%%%%%%%%%%%%%%%
\section{Collision orbits}
\label{secCollisionOrbit}
%%%%%%%%%%%%%%%%%%%%
%%%%%%%%%%%%%%%%%%%%%
%%%%%%%%%%%%%%%%%%%%%
%%%%%%%%%%%%%%%%%%%%
In this section, we will show that Saari's homographic conjecture is true for orbits that encounter collisions in the future or in the past.

\begin{theorem}
\label{propositionForTotalCollisionOrbit}
For the $n$-body problem given by a potential with $a<2$, every total collision solution of constant configurational measure is homothetic. 
\end{theorem}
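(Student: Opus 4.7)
The plan is to reduce the theorem to showing that the integration constant $B$ equals zero, then read off $B=0$ directly from the first integral (\ref{eqIntegralForI}) by letting $I\to 0$ at the total-collision time. Lemma~\ref{lemmaMotionWithBeqZero} already establishes that $B=0$ is equivalent to the motion being homothetic, so once $B=0$ is proved the theorem follows immediately.

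Let $t^{*}$ denote the instant of total collision, so that $I(t)\to 0$ as $t\to t^{*}$. I examine
\[
\Phi(I)=-4HI-4\mu I^{(2-a)/2}
\]
in this limit. Because $a<2$, the exponent $(2-a)/2$ is strictly positive, so $I^{(2-a)/2}\to 0$; together with the linear term $-4HI$, this gives $\Phi(I)\to 0$. The first integral (\ref{eqIntegralForI}) then yields
\[
\frac{1}{2}\left(\frac{dI}{dt}\right)^{2}\;\longrightarrow\;-2B \quad \text{as } t\to t^{*}.
\]
Since the left-hand side is nonnegative, $B\le 0$; combined with the already-established inequality $B\ge 0$ coming from the Cauchy--Schwarz argument (\ref{CauchySchwartz}), this forces $B=0$. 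Lemma~\ref{lemmaMotionWithBeqZero} then identifies the motion as homothetic, which is exactly the conclusion.

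The proof is short because all the preparatory tools were built in Section~\ref{secEvolutionOfI}: the first integral, the sign of $B$, and the characterisation $B=0\iff\text{homothetic}$. The single place where the hypothesis $a<2$ is decisive is in making the potential term $\mu I^{(2-a)/2}$ vanish at collision; for $a\ge 2$ it either stays bounded away from zero or grows, which---as Figure~\ref{figPhi} suggests---alters the qualitative picture entirely. I do not anticipate a real obstacle; at worst one should verify that $(dI/dt)^{2}$ admits the asserted limit along the collision sequence, but that is automatic from the first integral once $\Phi(I)\to 0$ has been shown.
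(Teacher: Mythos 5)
Your proposal is correct and follows essentially the same route as the paper's own proof: let $I\to 0$ at the total collision, use $a<2$ to conclude $\Phi(I)\to 0$ in the first integral (\ref{eqIntegralForI}), deduce $-2B\ge 0$ from the nonnegativity of $\tfrac12(dI/dt)^2$, combine with $B\ge 0$ to get $B=0$, and invoke Lemma~\ref{lemmaMotionWithBeqZero}. No substantive difference.
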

\begin{proof}
For $a<2$, consider a solution that experiences a total collision forwards or backwards in time and for which $\mu=$ constant.
Since
\[
\frac{1}{2}
	\left(
		\frac{dI}{dt}
	\right)^2
	+
	\left(
		-4HI + 4\mu I^{(2-a)/2}
	\right)
	= -2B
\]
and
\[
\lim_{I \to 0} 
	\left(
		-4HI + 4\mu I^{(2-a)/2}
	\right)
	=0
\mbox{ for } a<2,
\]
the total collision, which occurs when $I \to 0$, can take place only if $-2B \ge 0$.
But we already know that $B \ge 0$.
Thus, for $a<2$, the total collision can take place if and only if
$B=0$. By Lemma \ref{lemmaMotionWithBeqZero}, the orbit is homothetic.
\end{proof}

An obvious consequence of Theorem  \ref{propositionForTotalCollisionOrbit} and Lemma \ref{collisionIsTotal} is the following result, which shows that, for appropriate homogeneous potentials, Saari's homographic conjecture is true for collision orbits in general.

\begin{theorem}
\label{collisionOrbitIsHomothetic}
For $0<a<2$, any collision orbit with $\mu$=constant
is homothetic.
\end{theorem}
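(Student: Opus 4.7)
The plan is to observe that this theorem is a direct corollary of the two preceding results, Lemma \ref{collisionIsTotal} and Theorem \ref{propositionForTotalCollisionOrbit}, with no additional machinery required beyond checking that the hypothesis $0<a<2$ satisfies the hypotheses of both prior statements.

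First I would note that the hypothesis $0<a<2$ in particular gives $a>0$, so Lemma \ref{collisionIsTotal} applies: any collision encountered by an orbit with $\mu=$ constant must be a total collision, that is, $I\to 0$ at the collision instant. This upgrades the assumption ``collision orbit'' into the stronger statement ``total collision orbit.''

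Next, I would invoke Theorem \ref{propositionForTotalCollisionOrbit}, which requires only $a<2$ (also satisfied by our assumption $0<a<2$) and the existence of a total collision in forward or backward time, together with constant configurational measure. The conclusion of that theorem is precisely that the orbit is homothetic. Concatenating the two implications yields the desired result.

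There is really no genuine obstacle here; the only thing to verify is the compatibility of the range of $a$. The hypothesis $0<a<2$ is precisely the intersection of the hypotheses ``$a>0$'' required by Lemma \ref{collisionIsTotal} and ``$a<2$'' required by Theorem \ref{propositionForTotalCollisionOrbit}. The proof should therefore be a one- or two-sentence chaining of the two cited results, with no further computation needed.
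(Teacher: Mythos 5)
Your proposal matches the paper exactly: the paper states this theorem as ``an obvious consequence of Theorem \ref{propositionForTotalCollisionOrbit} and Lemma \ref{collisionIsTotal},'' which is precisely the chaining you describe, including the observation that $0<a<2$ supplies both the $a>0$ needed for Lemma \ref{collisionIsTotal} and the $a<2$ needed for Theorem \ref{propositionForTotalCollisionOrbit}. No gap; nothing further is required.
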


We can now also show that the extended conjecture is true in the rectilinear (one-dimensional) case.

\begin{theorem}
\label{rectilinearCase}
For $0<a<2$, every rectilinear solution of $n$-body problem that has constant configurational measure is homothetic.
\end{theorem}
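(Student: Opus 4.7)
The strategy is to show $B=0$, since Lemma~\ref{lemmaMotionWithBeqZero} then identifies the solution as homothetic. To do this, I would introduce a shape/scale decomposition that makes $B$ very transparent in the rectilinear setting.

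Setting $R(t):=\sqrt{I(t)/M}$ and $\xi_k(t):=q_k(t)/R(t)$, one has $\sum_k m_k\xi_k^{2}\equiv M$, and hence $\sum_k m_k\xi_k\dot\xi_k\equiv 0$. A direct computation from $\dot q_k=\dot R\,\xi_k+R\,\dot\xi_k$ then gives
\[
2T \;=\; M\dot R^{\,2}+R^{2}K,\qquad \tfrac{1}{2}\dot I \;=\; M R\dot R,\qquad K:=\sum_k m_k\dot\xi_k^{\,2}\ge 0 ,
\]
so that
\[
B \;=\; 2IT-\bigl(\tfrac{1}{2}\dot I\bigr)^{2} \;=\; MR^{4}K .
\]
Hence, in the rectilinear case, $B=0$ is precisely the condition $\dot\xi\equiv 0$, i.e., the motion is homothetic. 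Moreover, since $U$ is homogeneous of degree $-a$, the hypothesis $\mu=UI^{a/2}\equiv\text{const}$ rewrites as $U(\xi(t))\equiv \mu M^{-a/2}$, so $\xi(t)$ is confined to a level set $N$ of $U$ restricted to the rectilinear shape sphere $\{\sum_k m_k\xi_k=0,\ \sum_k m_k\xi_k^{2}=M\}$.

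For $n=3$ this already concludes the proof: the rectilinear shape sphere is one-dimensional and the restriction of $U$ to it is a non-constant real-analytic function, so $N$ is a discrete set; therefore $\xi$ is constant, $K\equiv 0$, $B=0$, and Lemma~\ref{lemmaMotionWithBeqZero} applies. For $n\ge 4$ the level set has positive dimension, and one has to supplement the shape constraint with the equations of motion. Projecting them onto the direction $\xi$ and using Euler's identity $\sum_k\xi_k\,\partial U/\partial\xi_k=-aU(\xi)$ yields a Kepler-type radial equation
\[
M\ddot R \;=\; -\,aR^{-a-1}U(\xi)+\frac{B}{MR^{\,3}}\,,
\]
which decouples completely from the shape dynamics, while differentiating $U(\xi)\equiv\text{const}$ twice along the flow provides a rigidity relation coupling $\mathrm{Hess}\,U(\dot\xi,\dot\xi)$ with $|\nabla U|^{2}$ and $\ddot R/R$. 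The main obstacle is to rule out $\dot\xi\not\equiv 0$ in this over-determined system; I would do so by rescaling time through $ds=\sqrt{B/M}\,R^{-2}\,dt$, which makes $\xi(s)$ a geodesic of constant mass-weighted speed on the compact manifold $N$, arguing that the rigidity relation forces its $\omega$-limit set to consist of critical points of $U$ on the shape sphere (the rectilinear central configurations), and then appealing to Moulton's theorem that these are isolated to conclude $\dot\xi\equiv 0$ and hence $B=0$.
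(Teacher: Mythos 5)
Your scale--shape computation is correct and, for three bodies, it yields a complete proof by a genuinely different route from the paper's. With $R=\sqrt{I/M}$ and $\xi_k=q_k/R$ one does get $B=2IT-\tfrac14\dot I^{\,2}=MR^4K$, the hypothesis $\mu=\mathrm{const}$ does reduce to $U(\xi)=\mathrm{const}$ on the rectilinear shape sphere, and for $n=3$ that sphere is a circle on whose arcs (between binary-collision points, where $U\to\infty$) the restriction of $U$ is real-analytic and non-constant, so its level sets are discrete and $\xi$ must be constant; then $K\equiv0$, $B=0$, and Lemma~\ref{lemmaMotionWithBeqZero} finishes. That is a clean, self-contained argument for $n=3$.

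The problem is that the theorem is stated for the $n$-body problem, and for $n\ge4$ what you offer is a programme, not a proof. The decisive step --- excluding $\dot\xi\not\equiv0$ when the level set $N=\{U(\xi)=\mathrm{const}\}$ has positive dimension --- is precisely the content of the theorem in that case, and none of the ingredients is established: the geodesic claim for the reparametrized curve is asserted rather than verified (it is in fact checkable, since the tangential gradient of $U$ is normal to $N$ inside the sphere, but you do not do this); the assertion that the second-derivative ``rigidity relation'' forces the $\omega$-limit set into the critical points of $U$ is not argued at all and is the real analytic difficulty, since $R(s)$ is an independent, possibly unbounded or oscillating quantity entering that relation; $N$ need not be a compact \emph{manifold} if the level value is critical; and even granting that $\xi(s)$ converges to an isolated critical point, that does not by itself yield $\dot\xi\equiv0$ along the whole orbit. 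So for $n\ge4$ there is a genuine gap. The paper avoids all of this with a two-line reduction valid for every $n$: every rectilinear orbit with attractive forces suffers a collision in forward or backward time (quoted from \cite{collinear}); by Lemma~\ref{collisionIsTotal} constancy of $\mu$ forces any collision to be total; and by Theorem~\ref{propositionForTotalCollisionOrbit} a total collision with $a<2$ forces $B=0$, hence homothety via Lemma~\ref{lemmaMotionWithBeqZero}. The collision fact is exactly the ingredient your approach is missing for general $n$, and importing it would let you bypass the $n\ge4$ shape analysis entirely.
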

\begin{proof}
We can easily show (see e.g.\ \cite{collinear}) that every $n$-body rectilinear orbit with attractive force performs a collision in the future or in the past. Then Theorem \ref{collisionOrbitIsHomothetic} provides the proof.
\end{proof}

%%%%%%%%%%%%%%%%%%%
%%%%%%%%%%%%%%%%%%%
%%%%%%%%%%%%%%%%%%%
\section{Fujiwara coordinates}
%%%%%%%%%%%%%%%%%%%
%%%%%%%%%%%%%%%%%%%
%%%%%%%%%%%%%%%%%%%
\label{secDecomposition}
In this section, we introduce some new coordinates, which were proposed by one of us. They will help us describe the shape of the
planar configuration formed by the point masses.
%Equations (\ref{decompositionOfVelocity}) and (\ref{decompositionOfKineticEnergy}) are very important because they give the  explicit expression of the velocity's decomposition
%and provide explicit values of the vector's magnitudes.

%%%%%%%%%%%%
%%%%%%%%%%%%
We employ complex variables to express our dynamical variables in $\mathbb{R}^2$. For $(a_x, a_y)$, $(b_x, b_y) \in \mathbb{R}^2$,
we write $a=a_x+ia_y$, $b=b_x+ib_y  \in \mathbb{C}$.
We use the following notations: $a^\dag = a_x -ia_y$ for the complex conjugate, $a\cdot b=a_x b_x+a_y b_y$ for the inner product, and
$a\wedge b=a_x b_y - a_y b_x$ for the outer product.
Then we have
\[
a^\dag b = a\cdot b + i a\wedge b.
\]
%%%%%%%%%%%%

Valid in the planar $n$-body problem for any $n\ge3$, the Fujiwara coordinates, $Q_k$, are defined by
\begin{equation}
\label{defOfQ}
Q_k = \exp \left(-iC \int_0^t \frac{dt}{I}\right)
		\frac{q_k}{\sqrt{I}}.
\end{equation}
The scaling factor, $1/\sqrt{I(q)}$, makes the Fujiwara coordinates independent of size, and the phase factor $\exp (-iC\int dt/I)$ makes them also independent of the angular momentum.
Indeed, we can easily check that
\begin{equation}
\label{IforQ}
\sum m_k |Q_k|^2=1
\end{equation}
and 
\begin{equation}
\label{sumOfQP}
\sum m_k Q_k\cdot \frac{dQ_k}{dt}
+i\sum m_k Q_k\wedge \frac{dQ_k}{dt}
=\sum m_k Q_k^\dag \frac{dQ_k}{dt}=0.
\end{equation}
So in terms of Fujiwara coordinates, the moment of inertia and the angular momentum become $I(Q)=1$ and $C(Q)=0$.

Notice that relative to the coordinates $q_k$, the particle system is rotating as a whole with angular velocity $C/I(q)$. In Fujiwara coordinates, the phase factor is chosen such that an observer
placed in those coordinates rotates with the system at the same angular velocity  $C/I(q)$. So for the observer, the angular velocity appears to be zero. 

All the above facts make us expect that the evolution of the particle system expressed in $Q_k(t)$-coordinates describes the change in shape of the configuration given by the original variables $q_k(t)$. Indeed, we have the following result.
\begin{lemma}
\label{lammaHomographic1}
A planar solution expressed in coordinates $q_k(t)$ is homographic if and only if $dQ_k(t)/dt =0$ for all $k$ and $t$.
\end{lemma}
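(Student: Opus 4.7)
The plan is to verify the two implications directly using the complex-variable description of planar rotations. Throughout I would use that, since we are in $\mathbb{R}^2\cong\mathbb{C}$, an orthogonal matrix giving a proper rotation acts by multiplication by a unit complex number $e^{i\theta(t)}$ (the improper case is ruled out by continuity from the identity at $t=t_0$), so homographic means $q_k(t)=R(t)e^{i\theta(t)}q_k(t_0)$ for some scalar $R>0$ and real $\theta$.

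For the implication ($\Leftarrow$), suppose $dQ_k/dt=0$ for all $k$ and $t$. Then $Q_k(t)\equiv Q_k(0)=q_k(0)/\sqrt{I(0)}$. Inverting the definition \eqref{defOfQ} gives
\[
q_k(t)=\sqrt{I(t)}\,\exp\!\left(iC\int_0^t\frac{ds}{I(s)}\right)Q_k(0)
=R(t)\,e^{i\theta(t)}q_k(0),
\]
with $R(t):=\sqrt{I(t)/I(0)}>0$ and $\theta(t):=C\int_0^t ds/I(s)$. This is exactly the definition of homographic motion with $t_0=0$.

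For the implication ($\Rightarrow$), suppose $q_k(t)=R(t)e^{i\theta(t)}q_k(0)$. Two computations feed into the Fujiwara formula. First, $|q_k(t)|^2=R(t)^2|q_k(0)|^2$ summed with weights $m_k$ gives $I(t)=R(t)^2 I(0)$, hence $\sqrt{I(t)}=R(t)\sqrt{I(0)}$. Second, from $\dot q_k=(\dot R/R)q_k+i\dot\theta\, q_k$ I compute the angular momentum using $q_k\wedge\dot q_k=\operatorname{Im}(q_k^\dag \dot q_k)=\dot\theta\,|q_k|^2$, so
\[
C=\sum m_k\, q_k\wedge\dot q_k=\dot\theta\,I(t),
\]
yielding $\theta(t)-\theta(0)=C\int_0^t ds/I(s)$. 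We may take $\theta(0)=0$ and $R(0)=1$ without loss of generality (redefining the reference configuration by an initial rotation and scaling, which are absorbed into the constants). Substituting into \eqref{defOfQ},
\[
Q_k(t)=\exp\!\left(-iC\int_0^t\frac{ds}{I(s)}\right)\frac{R(t)e^{i\theta(t)}q_k(0)}{R(t)\sqrt{I(0)}}
=\frac{q_k(0)}{\sqrt{I(0)}},
\]
which is independent of $t$, so $dQ_k/dt=0$.

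There is no real obstacle here: the argument is a direct unwinding of definitions. The only point that requires mild care is showing that for a planar homographic solution the rotation rate is forced to equal $C/I$; this is the step where the $1/\sqrt{I}$ scaling and the phase factor $e^{-iC\int ds/I}$ are seen to conspire to cancel all scaling and rotation simultaneously, thereby characterising homographic motions as those that look stationary to the Fujiwara observer. One should also remark that the calculation assumes continuity of the orientation of $\Omega(t)$, so the orthogonal matrices arising are rotations rather than reflections, which is automatic in the planar case by connectedness in $t$.
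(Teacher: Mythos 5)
Your proof is correct and follows essentially the same route as the paper: both directions are the same direct substitution into the definition of $Q_k$, using $I(t)=R(t)^2I(0)$ and $C=I(t)\,d\theta/dt$ to see the scaling and phase factors cancel. Your added remark that $\Omega(t)$ must act as a proper rotation $e^{i\theta(t)}$ by continuity is a small point of care the paper passes over implicitly.
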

\begin{proof}
We can rewrite the definition of a homographic solution given in Section \ref{notations} as
\[
q_k(t)=R(t)e^{i\theta(t)}q_k(0),
\]
where $R \ge 0$ describes the evolution of the particle system in size and $\theta \in \mathbb{R}$ describes its rotation. Without loss of generality, we can take as initial conditions $R(0)=1$ and $\theta(0)=0$.
Then
\[
I(t) = \sum m_k |q_k|^2= R(t)^2 I(0)
\]
and
\[
C = R(t)^2 \frac{d\theta}{dt}I(0)=I(t)\frac{d\theta}{dt}.
\]
Thus, the Fujiwara variables take the form
\[
Q_k(t) 
= \exp{\left(-iC\int_0^t \frac{dt}{I}\right)}\frac{q_k(t)}{\sqrt{I}}
= e^{-i\theta(t)}\frac{ q_k(t)}{R(t)\sqrt{I(0)}}
=\frac{q_k(0)}{\sqrt{I(0)}},
\]
which is constant. Therefore, for any homographic orbit, 
$dQ_k/dt=0$.

Conversely, if $dQ_k/dt=0$, then $Q_k(t)=Q_k(0)$. So the variables $q_k(t)$ satisfy the relation
\[
q_k(t)
=\sqrt{I(t)}\exp{\left( iC\int \frac{dt}{I} \right)}Q_k(0).
\]
This implies that the solution is homographic.
\end{proof}

Using the fact that
\[
\frac{dQ_k}{dt}
=\frac{1}{\sqrt{I}}\exp{\left(-iC\int \frac{dt}{I}\right)}
	\Bigg(
		\frac{dq_k}{dt}
		+
		\left(
			-\frac{1}{2I}\frac{dI}{dt}-i\frac{C}{I}
		\right)
		q_k
	\Bigg),
\]
we get the decomposition of
the velocity and the kinetic energy in terms of Fujiwara coordinates,
\begin{equation}
\label{decompositionOfVelocity}
\begin{split}
\frac{dq_k}{dt}
&=	\left(
		i \frac{C}{I}
		+
		\frac{1}{2I}\frac{dI}{dt}
	\right)
	q_k
	+
	\sqrt{I}
	\exp{\left(iC\int \frac{dt}{I}\right)}
	\frac{dQ_k}{dt}\\
&=\sqrt{I}
	\exp{\left(iC\int \frac{dt}{I}\right)}
	\left\{
		\left(
			i \frac{C}{I}
			+
			\frac{1}{2I}\frac{dI}{dt}
		\right)
		Q_k
		+\frac{dQ_k}{dt}
	\right\},
\end{split}
\end{equation}

\begin{equation}
\label{decompositionOfKineticEnergy}
T(q)
=	\frac{C^2}{2I}
	+\frac{1}{8I}\left( \frac{dI}{dt}\right)^2
	+\frac{I}{2}\sum m_k \left| \frac{dQ_k}{dt} \right|^2.
\end{equation}
To obtain equation (\ref{decompositionOfKineticEnergy})
from equation (\ref{decompositionOfVelocity}), we have used the following ``orthogonality'' relationships between the basis vectors
$i Q_k$, $Q_k$, and $dQ_k/dt$, $k=1,\dots, n$,
\[
\sum m_k Q_k \wedge Q_k =0,
\sum m_k Q_k \cdot \frac{dQ_k}{dt}=0,
\sum m_k Q_k \wedge \frac{dQ_k}{dt}=0
\]
and the ``normalisation''
\[
\sum m_k |Q_k|^2=1.
\]
Fujiwara coordinates are strictly related to  Saari's decomposition of velocities \cite{saari2, saari-book} for the $n$-body problem. Using such decomposition one can express  the velocity as a sum of three orthogonal (at least in the coplanar case) components, namely 
$v = w_1 +  w_2 + w_3$, where $w_1$ describes changes in orientation for the system of particles,  $w_2$ changes in size, and $w_3$ changes in shape.  Equations (\ref{decompositionOfVelocity}) and (\ref{decompositionOfKineticEnergy}) are very important because they give the  explicit expression of the velocity's decomposition
and provide explicit values of the vector's magnitudes.

The geometrical meaning of the coordinates we use here can be expressed in terms of the so called {\itshape shape sphere}
 \cite{moeckel88, montgomery}.
If one reduces the configuration space by rotations and translations the three-body problem in the plane has a reduced configuration space isomorphic to $\mathbb{R}^3$. This reduced space is endowed with a metric, induced from the mass metric on the configuration space, which makes it a cone over a sphere of radius $\frac 1 2$.  This sphere is called the shape sphere and  is to be thought of as the space of oriented similarity classes of triangles \cite{montgomery}.
The coordinates we introduce below are coordinates on the fiber bundle obtained by the following operations: projecting  the motion on the shape sphere %(complex projective shape space if $n\ge > 4$) 
and then lifting uniquely (starting from $q_k(0)/I^{1/2}(0)$) the  result horizontally, for the connexion defined by the Saari's decomposition  of velocities, on the fiber  bundle whose total space is $\mathbb{R}^4\setminus\{0\}$ (i.e. the configuration  space modulo translations), base the shape sphere, fiber $\mathbb{C}\setminus\{0\}$ and  projection the quotient by the non trivial similitudes.

%In 1985, Saari \cite{saari2, saari-book} introduced an important decomposition of the velocity of the $n$-body problem in three orthogonal components. He expressed the velocity as 
%$v = w_1 +  w_2 + w_3$, where $w_1$ describes changes in orientation for the system of particles,  $w_2$ changes in size, and $w_3$ changes in shape. Equations (\ref{decompositionOfVelocity}) and (\ref{decompositionOfKineticEnergy}) are very important because they give the  explicit expression of the velocity's decomposition
%and provide explicit values of the vector's magnitudes.

The equations of motion expressed in Fujiwara variables are
\begin{equation}
\label{equationOfMotionForQ0}
\begin{split}
m_k \frac{d^2 Q_k}{dt^2}
=\frac{g_k(Q)}{I^{(a+2)/2}}
	&-\left(
		\frac{1}{I}\frac{dI}{dt}
		+i \frac{2C}{I}
	\right)
	m_k \frac{dQ_k}{dt}\\
	&-\left(
		\frac{1}{2I}\frac{d^2I}{dt^2}
		-\frac{1}{4I^2}\left(\frac{dI}{dt}\right)^2
		-\frac{C^2}{I^2}
	\right)
	m_k Q_k,
\end{split}
\end{equation}
where
$g_k(Q)=\sum_{j\ne k} m_j m_k (Q_j-Q_k)/r_{jk}^{a+2}(Q)$
and $I=I(q)$.

Before closing this section, let us give a proof of Saari's homographic conjecture in the collinear $n$-body case.
\begin{theorem}
Any collinear $n$-body orbit with non-vanishing angular momentum
is homographic.
\end{theorem}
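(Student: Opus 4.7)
The plan is to exploit the Fujiwara coordinates just introduced. Since the orbit is collinear and the centre of mass is at the origin, the common line of the configuration passes through $0$, and I may parametrize the bodies as $q_k(t)=r_k(t)\,e^{i\theta(t)}$, with $r_k(t)\in\mathbb{R}$ and $e^{i\theta(t)}$ the direction of the line at time $t$. A direct computation of the angular momentum under this ansatz yields $C=I\dot\theta$, so $\dot\theta=C/I$. The hypothesis $C\ne 0$ rules out total collision (a classical consequence of Sundman's inequality), so $I$ stays strictly positive on the maximal interval of existence and $\dot\theta$ is nowhere zero. The phase in (\ref{defOfQ}) then satisfies $C\int_0^t dt/I=\theta(t)-\theta(0)$, and the Fujiwara coordinates reduce to
\[
Q_k(t)=\frac{r_k(t)}{\sqrt{I(t)}}\,e^{i\theta(0)},
\]
all sharing the same constant phase. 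By Lemma~\ref{lammaHomographic1}, the theorem reduces to showing that each ratio $r_k(t)/\sqrt{I(t)}$ is independent of $t$.

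For the second step, I would decompose Newton's equation along and perpendicular to the line. Because the potential depends only on mutual distances and the relative positions $q_j-q_k=(r_j-r_k)e^{i\theta}$ lie along $e^{i\theta}$, every force $g_k$ points along that same direction. Expanding $\ddot q_k=\bigl(\ddot r_k-r_k\dot\theta^{\,2}+i(2\dot r_k\dot\theta+r_k\ddot\theta)\bigr)e^{i\theta}$ and equating the tangential component to zero gives $2\dot r_k\dot\theta+r_k\ddot\theta=0$, which is exactly $\tfrac{d}{dt}(r_k^2\dot\theta)=0$ for every $k$. Thus each $r_k^2\dot\theta=\ell_k$ is a separate constant of motion, and since $\dot\theta=C/I$ this yields
\[
\frac{r_k(t)^2}{I(t)}=\frac{\ell_k}{C},
\]
which is manifestly time-independent. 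Consequently $dQ_k/dt=0$, and Lemma~\ref{lammaHomographic1} delivers the homographic conclusion.

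I do not foresee a serious obstacle; the proof is essentially a rigidity statement, with the conservation of angular momentum applied \emph{particle by particle} forcing the shape to freeze instantly. The single subtle point is the non-vanishing of $\dot\theta$, which follows at once from $C\ne 0$ together with the classical absence of total collisions in that regime; after that, everything reduces to routine polar-coordinate calculation and an appeal to the characterisation of homographic solutions given by Lemma~\ref{lammaHomographic1}.
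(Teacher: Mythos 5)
Your proof is correct and is essentially the paper's own argument: the same polar ansatz $q_k=r_ke^{i\theta}$, the same identity $C=I\,d\theta/dt$ making the Fujiwara coordinates real up to a fixed phase, and the same appeal to Lemma~\ref{lammaHomographic1}. The only cosmetic difference is that you extract the component of Newton's equations perpendicular to the line directly in the original variables, obtaining the per-particle conservation law $\tfrac{d}{dt}\bigl(r_k^2\,d\theta/dt\bigr)=0$ and then dividing by $d\theta/dt=C/I\neq 0$, whereas the paper reads off the same information as the unique purely imaginary term $-i(2C/I)\,m_k\,dQ_k/dt$ in the transformed equation of motion (\ref{equationOfMotionForQ0}); the two steps are identical in content.
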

The proof of this theorem has been given by three of this paper's authors in \cite{collinear} for any potential that depends only on the mutual distances. Alternatively, Saari also proved this theorem in \cite{saari-book}. In the context of Fujiwara coordinates, the proof goes as follows.
\begin{proof}
For collinear motion, we can write that
\[
q_k(t)=e^{i\phi(t)}r_k(t),
\]
with $\phi, r_k \in \mathbb{R}$ and $\phi(0)=0$.
Then,
\[
I=\sum m_k |q_k|^2 = \sum m_k r_k^2,
\]
\[
C
=\Im \left( \sum m_k q_k^\dag \frac{dq_k}{dt} \right)
=I \frac{d\phi}{dt},
\]
where $\Im (z)$ is the imaginary part of the complex number $z$.
Therefore
\[
Q_k(t)
=\exp\left( -i C \int_0^t \frac{dt}{I} \right)
	e^{i\phi}
	\frac{r_k}{\sqrt{I}}
=\frac{r_k}{\sqrt{I}}
\in \mathbb{R}.
\]
So, $dQ_k/dt, d^2Q_k/dt^2$, and $g_k(Q)$ are real.
Then, in equation (\ref{equationOfMotionForQ0}) with $C\ne 0$,
only one term,
\[
-i \frac{2C}{I}
	m_k \frac{dQ_k}{dt},
\]
is purely imaginary and all the other terms are real.
Thus $dQ_k/dt=0$.
By Lemma \ref{lammaHomographic1},
the motion is homographic.

To prove this theorem for any potential that depends only on the mutual distances, as it was shown in \cite{collinear}, it is enough to
replace the term $g_k/I^{(a+2)/2}$ in equation (\ref{equationOfMotionForQ0}) with a suitable function that is real for real values of $Q_k$.
\end{proof}

%%%%%%%%%%%%%%%%%%%
%%%%%%%%%%%%%%%%%%%
%%%%%%%%%%%%%%%%%%%

\section{Shape evolution for constant configurational measure solutions}
\label{kineticSection}
%%%%%%%%%%%%%%%%%%%
%%%%%%%%%%%%%%%%%%%
%%%%%%%%%%%%%%%%%%%
In this section, we prove two lemmas that characterise the solutions with $\mu=$ constant. Then we provide a simple expression of the kinetic energy and obtain the equations of motion that describe shape evolution. 

Substituting equations (\ref{eqIntegralForI}) and (\ref{valueOfPhi})
into equation (\ref{decompositionOfKineticEnergy}), we obtain
\begin{equation}
\label{kineticEnergyForShapeOriginal}
\sum m_k \left| \frac{dQ_k}{dt} \right|^2
=\frac{B-C^2}{I^2(q)}.
\end{equation}
Since the left hand side must be non-negative, $B$ is non-negative and
\begin{equation}
B\ge C^2 \ge 0.
\end{equation}
Note that the fact that $B-C^2\geq 0$ is nothing but  the well known Sundman inequality (see \cite{AC,saari-book}) and the cases of equality are known to be homographic motions whose  configuration is central \cite{AC, saari-book}.
We thus have the following result.
\begin{lemma}
\label{lammaHomographic2}
Every planar solution of constant configurational measure is homographic if and only if $B=C^2$.
\end{lemma}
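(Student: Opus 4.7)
The plan is to derive this lemma as essentially a direct corollary of equation (\ref{kineticEnergyForShapeOriginal}) combined with Lemma \ref{lammaHomographic1}. The key observation is that the right-hand side of (\ref{kineticEnergyForShapeOriginal}), namely $(B-C^2)/I^2(q)$, controls a weighted sum of non-negative terms $\sum m_k |dQ_k/dt|^2$, so vanishing of this sum is equivalent to $B = C^2$ whenever $I(q) > 0$.

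For the ``if'' direction, I would assume $B = C^2$. Plugging into (\ref{kineticEnergyForShapeOriginal}) gives $\sum m_k |dQ_k/dt|^2 = 0$ at every instant where the solution is defined (so in particular at non-collision times, where $I(q) > 0$). Because each mass $m_k$ is strictly positive and each term $|dQ_k/dt|^2$ is non-negative, we conclude $dQ_k/dt = 0$ for every $k$ and every $t$. By Lemma \ref{lammaHomographic1}, the original solution $q_k(t)$ is homographic.

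For the ``only if'' direction, I would assume the solution is homographic. Lemma \ref{lammaHomographic1} then yields $dQ_k/dt \equiv 0$ for every $k$, so the left-hand side of (\ref{kineticEnergyForShapeOriginal}) vanishes identically. Since $I(q) > 0$ for at least some time (otherwise the configuration is trivially a single point at the origin, which is excluded), we get $(B-C^2)/I^2(q) = 0$, hence $B = C^2$.

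Since the statement follows essentially by inspection from the two cited results, there is no real obstacle: the only point that deserves a line of care is checking that the argument is valid at times of total collision, but total-collision instants form a discrete set on the interval of definition (or occur only at the endpoints), so the identity $B = C^2$, read off from a single non-collisional time, holds as an algebraic relation among conserved quantities. I would therefore present the proof as a short two-part verification, invoking equation (\ref{kineticEnergyForShapeOriginal}) and Lemma \ref{lammaHomographic1} explicitly, and noting at the end that Sundman's inequality already guarantees $B \ge C^2$, so Lemma \ref{lammaHomographic2} identifies its equality case with the class of homographic motions.
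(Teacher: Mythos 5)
Your proposal is correct and follows exactly the paper's own route: the authors also deduce the lemma directly from equation (\ref{kineticEnergyForShapeOriginal}) together with Lemma \ref{lammaHomographic1}, simply declaring the statement ``obvious'' from these two facts. Your write-up just makes explicit the two directions and the positivity of the masses, which the paper leaves to the reader.
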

\begin{proof}
The statement is obvious by equation (\ref{kineticEnergyForShapeOriginal})
and Lemma \ref{lammaHomographic1}.
\end{proof}

Equation (\ref{kineticEnergyForShapeOriginal}) shows that
if the configurational measure is constant,
then $I^2(q) \sum m_k |dQ_k/dt|^2 = B-C^2=$ constant.
Saari showed that the converse is also true. He proved that
$I^2(q) \sum m_k |dQ_k/dt|^2 =$ constant if and only if $\mu=$constant \cite{saari-book}.
Indeed, he derived a nice relation between
$I^2(q) \sum m_k |dQ_k/dt|^2$ and $\mu=U(q) I^{a/2}(q)=U(Q)$.
In our notation, this relation is
\[
\frac{d}{dt} 
	\left(
		I^2(q) \sum m_k \left|\frac{dQ_k}{dt}\right|^2
	\right)
	=
	2I^{1-a/2}(q) \frac{d\mu}{dt}.
\]
To prove this relation, let us take the innner product with $dQ_k/dt$ and equation (\ref{equationOfMotionForQ0}), and we obtain
\[
\sum m_k \frac{dQ_k}{dt}\cdot\frac{d^2Q_k}{dt^2}
=\frac{1}{I^{(a+2)/2}} \sum g_k(Q)\cdot\frac{dQ_k}{dt}
	-\frac{1}{I}
	\sum m_k \left| \frac{dQ_k}{dt} \right|^2.
\]
Therefore
\[
\begin{split}
\frac{d}{dt} 
	\left(
		I^2 \sum m_k \left|\frac{dQ_k}{dt}\right|^2
	\right)
&=2 I^{1-a/2} \sum g_k(Q)\cdot \frac{dQ_k}{dt}\\
&=2I^{1-a/2} \frac{dU(Q)}{dt}.
\end{split}
\]

The equations of motion in Fujiwara coordinates 
for constant configurational measure solutions are given by
substituting equations (\ref{eqPhi}), (\ref{eqMotForI}), and (\ref{eqIntegralForI})
into (\ref{equationOfMotionForQ0}).
We thus obtain
%%%%%%%%%%
\begin{equation}
\label{equationOfMotionForQ}
m_k \frac{d^2 Q_k}{dt^2}
=	\left(
	-\frac{1}{I}\frac{dI}{dt}
	-2i\frac{C}{I}
	\right)m_k \frac{dQ_k}{dt}
	+
	\frac{G_k(Q)}{I^{(a+2)/2}}
	-\frac{B-C^2}{I^2}m_k Q_k,
\end{equation}
where
\begin{equation}
\label{defOfG}
G_k(Q) = g_k(Q)+a \mu m_k Q_k,
\end{equation}
$I=I(q)$, and $g_k(q)=\sum_{j\ne k} \frac{m_jm_k}{r_{jk}^{a+2}}(q_j-q_k)=m_k \frac{d^2 q_k}{dt^2}$ (see the equations of motion in Section \ref{notations}).

The configurations of the particle system for which $G_k(Q)=0$ are called {\it central configurations}. It is well known that for the planar three-body problem there are five classes of central configurations
(up to rotations and dilations/contractions): two equilateral (one for each orientation the triangle that has the point masses at its vertexes) and three rectilinear (one for each ordering of the bodies on a line) in which the ratio of the distances between particles is given by a complicated formula (obtained by Euler) that involves the values of the masses. 

We can now prove the following lemma
\begin{lemma}
\label{lammaHomographic3}
In the planar $n$-body problem, if the orbit is homographic, then the the bodies maintain the shape of the same central configuration for all times.
\end{lemma}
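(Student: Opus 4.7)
The natural route is to feed the homographic hypothesis through the Fujiwara-coordinate machinery already set up, in particular Lemmas \ref{lammaHomographic1} and \ref{lammaHomographic2} together with the equation of motion (\ref{equationOfMotionForQ}). First, I would verify that every homographic orbit automatically has constant configurational measure, so that the framework of Section \ref{kineticSection} is applicable. This is a short homogeneity check: writing $q_k(t) = R(t)\Omega(t) q_k(0)$ gives $r_{jk}(t) = R(t)\, r_{jk}(0)$, hence $U \propto R^{-a}$ and $I \propto R^{2}$, so $\mu = UI^{a/2}$ is time-independent. This step is essential because (\ref{equationOfMotionForQ}) was derived under the constant-$\mu$ assumption.

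With $\mu$ constant, I would chain the earlier results. Lemma \ref{lammaHomographic1} converts the homographic hypothesis into $dQ_k/dt \equiv 0$, and therefore also $d^2 Q_k/dt^2 \equiv 0$. Lemma \ref{lammaHomographic2} then forces $B = C^2$. Substituting these three pieces of vanishing information into (\ref{equationOfMotionForQ}), the left-hand side and all but one of the right-hand-side terms drop out, leaving $G_k(Q) = 0$ for every $k$ and every $t$, which is precisely the paper's definition of a central configuration.

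To complete the conclusion, I would transfer this back to the original variables via $q_k = \sqrt{I}\, e^{iC \int dt/I}\, Q_k$: the configurations $\{q_k(t)\}$ and $\{Q_k(t)\}$ differ only by a global dilation and rotation, and the relation $G_k(Q)=0$ unpacks, using the homogeneity and rotation-equivariance of $g_k$, into the classical identity $g_k(q) = -(aU/I)\, m_k q_k$, which is invariant under dilations and rotations. Hence $\{q_k(t)\}$ is central at every instant, and because $Q_k$ is independent of $t$, the shape is the same central configuration throughout.

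The only step requiring any care is the first one, namely ensuring that the hypotheses of Section \ref{kineticSection} are satisfied so that (\ref{equationOfMotionForQ}) may be legitimately invoked; everything afterward is a direct assembly of the previously established facts, so I do not expect a serious obstacle.
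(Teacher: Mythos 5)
Your proposal is correct and follows essentially the same route as the paper: invoke Lemmas \ref{lammaHomographic1} and \ref{lammaHomographic2} to get $dQ_k/dt=0$, $d^2Q_k/dt^2=0$, and $B=C^2$, then read off $G_k=0$ from equation (\ref{equationOfMotionForQ}). Your preliminary homogeneity check that a homographic orbit automatically has constant $\mu$ (and the remark on transferring $G_k(Q)=0$ back to the $q_k$) merely makes explicit details the paper leaves implicit, since the lemma sits in a section where $\mu=$ constant is already assumed.
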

%%%
\begin{proof}
From Lemmas \ref{lammaHomographic1} and \ref{lammaHomographic2}, $dQ/dt=0$, $d^2 Q/dt^2=0$, and $B=C^2$ for every homographic motion. Thus by equation (\ref{equationOfMotionForQ}), $G_k=0$. This implies that the bodies maintain the shape of the same central configuration for all time.
\end{proof}

The number of central configurations is known to be finite
for arbitrary masses  only in the three- and four-body problem.
In the planar three-body problem, the number of central configurations is five. This fact was proved by Moulton \cite{moulton} for $a>-1$ and by Albouy \cite{albouy} for $a>-2$. Hampton and Moeckel \cite{HamptonMoeckel} recently showed that the number of central configurations in the Newtonian planar four body problem is finite.
For these cases, if the motion satisfies $G_k=0$ for all time,
the bodies maintain the shape of the same central configuration,
since there is only a finite number of central configurations and
because once a solution forms a central configuration it cannot switch to another one due to continuity reasons. Therefore, the motion is homographic. Thus, we obtained the following result.

\begin{lemma}
\label{lammaHomographic4}
In the planar three-body problem for potentials with $a>-2$ and in the 
planar four-body problem given by the Newtonian potential, if $G_k=0$ for all time, then the solution is homographic and  the bodies maintain the shape of the same central configuration.
\end{lemma}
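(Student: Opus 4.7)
The plan is to combine the finiteness of central configurations with the built-in zero-angular-momentum property of the Fujiwara coordinates. First I would read the hypothesis $G_k(Q(t))=0$ for all $t$, via the definition (\ref{defOfG}), as saying that the configuration $\{Q_k(t)\}$ is a central configuration at every instant (the associated Lagrange multiplier happening to be $a\mu$). Since equation (\ref{IforQ}) forces $\sum m_k|Q_k|^2=1$ identically, the trajectory lives on a fixed ``mass sphere'', so the only remaining freedom inside the locus of central configurations on that sphere is an overall planar rotation.

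Next I would invoke the finiteness results quoted just before the lemma: by Moulton--Albouy in the three-body case with $a>-2$, and by Hampton--Moeckel in the Newtonian planar four-body case, there are only finitely many similarity classes of central configurations. Consequently the set $\{Q:\, G_k(Q)=0,\ \sum m_k|Q_k|^2=1\}$ is a finite disjoint union of rotation orbits, each homeomorphic to $SO(2)$. Because $t\mapsto Q(t)$ is smooth, it cannot jump between distinct orbits and must therefore remain in a single one for all time.

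Within that fixed orbit I can write $Q_k(t)=e^{i\phi(t)}Q_k(0)$ for some smooth real function $\phi$ with $\phi(0)=0$. A direct computation using $a\wedge b=\Im(a^\dag b)$ gives
\[
\sum_k m_k\, Q_k\wedge\frac{dQ_k}{dt}=\dot\phi(t)\sum_k m_k|Q_k|^2=\dot\phi(t),
\]
but equation (\ref{sumOfQP}) forces the left-hand side to vanish. Hence $\dot\phi\equiv 0$, so $dQ_k/dt\equiv 0$ for every $k$, and Lemma \ref{lammaHomographic1} identifies the solution as homographic with the constant central-configuration shape $Q(0)$.

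The only delicate point I anticipate is checking that distinct similarity classes really produce disjoint rotation orbits on the mass sphere, so that continuity genuinely confines $Q(t)$ to one orbit. This is immediate for shapes with trivial rotational stabiliser (such as the two equilateral triangles in the three-body problem); for a rectilinear shape the stabiliser is a finite subgroup of $SO(2)$, but the orbit is still a smooth circle and the computation above goes through verbatim. No other step requires more than combining the cited finiteness theorems with the two identities (\ref{IforQ}) and (\ref{sumOfQP}).
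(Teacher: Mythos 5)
Your proposal is correct and follows essentially the same route as the paper: the paper's own proof is precisely the observation that finiteness of central configurations (Moulton--Albouy for three bodies with $a>-2$, Hampton--Moeckel for the Newtonian planar four-body case) together with continuity forces the solution to keep the shape of one and the same central configuration, whence it is homographic. Your extra step---writing $Q_k(t)=e^{i\phi(t)}Q_k(0)$ on a single rotation orbit, using identity (\ref{sumOfQP}) to get $d\phi/dt\equiv 0$ and hence $dQ_k/dt\equiv 0$, then invoking Lemma \ref{lammaHomographic1}---just makes explicit the final implication that the paper asserts without detail, so it is a refinement of the same argument rather than a different one.
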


Let us further define the quantity
\[
\rho
=\sqrt{
	\frac{m_1 m_2 m_3}{M}
	\sum \frac{|G_\ell(Q)|^2}{m_\ell}
	},
\]
which in a certain sense measures the magnitude of the mathematical object formed by all functions $G_k$.

Relative to the possible values of $B-C^2$ and $\rho$, there are four cases to discuss:
\begin{enumerate}
\item[(i)]	$B-C^2=0$, $\rho=0$, 
\item[(ii)]	$B-C^2=0$, $\rho \ne 0$, 
\item[(iii)]	$B-C^2>0$, $\rho=0$, 
\item[(iv)]	$B-C^2>0$, $\rho \ne 0$.
\end{enumerate}
Lemmas \ref{lammaHomographic1}, \ref{lammaHomographic2}, and \ref{lammaHomographic3} state that if $B-C^2=0$, then $\rho=0$.
Therefore (i) is possible, whereas (ii) is impossible.
According to Lemmas \ref{lammaHomographic1}, \ref{lammaHomographic2}, and \ref{lammaHomographic4},
if $\rho=0$ then $B-C^2=0$ for the problems stated in Lemma \ref{lammaHomographic4}. Therefore, (i) is possible, whereas (iii) is impossible in these cases.

Saari's homographic conjecture states that the only possible solutions with $\mu=$ constant occur in case (i), and that cases (ii), (iii), and (iv) are not realised.
Therefore, Lemmas \ref{lammaHomographic1}, \ref{lammaHomographic2}, 
\ref{lammaHomographic3}, and \ref{lammaHomographic4} 
allow us to state Saari's homographic conjecture 
of the planar three-body problem as follows.
%%%%%%%%%%%%%%%%%%%%%%
\begin{conjecture}[Saari's homographic]
\label{conjecture2}
There are no  solutions of the planar three-body problem
with $\mu=$ constant, $B-C^2>0$, and $\rho \ne 0$.
\end{conjecture}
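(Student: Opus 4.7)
The plan is by contradiction: suppose a planar three-body solution has $\mu=UI^{a/2}$ constant, $B-C^2>0$, and $\rho\not\equiv 0$. By Lemma \ref{lammaHomographic2} the shape genuinely moves, $\dot Q_k\not\equiv 0$, and by Lemma \ref{lammaHomographic4} the configuration is not central throughout. I would work in the Fujiwara coordinates and exploit that, because $U$ is homogeneous of degree $-a$, one has $U(Q)=U(q)I^{a/2}=\mu$, so the Fujiwara orbit is constrained to $\{I(Q)=1\}\cap\{U(Q)=\mu\}$; after quotienting by the $S^1$-action $Q_k\mapsto e^{i\theta}Q_k$ (which preserves both $I$ and $U$), the shape projection traces a level curve of $U$ on the shape sphere $S^2$, whose critical points are exactly the five central configurations of the three-body problem.

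The key analytic step is to extract a pointwise constraint by differentiating $U(Q)=\mu$ twice and substituting (\ref{equationOfMotionForQ}) for $\ddot Q_k$. Using $\sum g_k(Q)\cdot Q_k=-a\mu$, $\sum g_k(Q)\cdot G_k(Q)/m_k=\sum|G_k(Q)|^2/m_k$ (because $\sum Q_k\cdot G_k=0$), and the shape-kinetic-energy identity (\ref{kineticEnergyForShapeOriginal}), this should yield
\[
\sum \dot g_k(Q)\cdot\dot Q_k + \frac{2C}{I}\sum g_k(Q)\wedge\dot Q_k + \frac{1}{I^{(a+2)/2}}\sum\frac{|G_k(Q)|^2}{m_k} + \frac{a\mu(B-C^2)}{I^2}=0.
\]
Under the standing hypotheses the last two terms are strictly positive (the third is a positive multiple of $\rho^2$), while the Hessian-of-$U$ term and the cross term containing $C$ have indefinite sign. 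Coupled with the scalar equation (\ref{eqIntegralForI}) for $I(t)$, this is the core system to analyse.

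Then I would split along the dynamical regimes of $I(t)$ from Section \ref{secEvolutionOfI}. The case $B=0$ is disposed of by Lemma \ref{lemmaMotionWithBeqZero}. For $B>0$ and $H\ge 0$ (case (b)), $I(t)\to\infty$ in at least one time direction; every $I^{-\alpha}$ term in the displayed identity decays, so one is forced to $\dot Q_k\to 0$ and $\rho\to 0$, placing the shape $\omega$-limit on the finite set of central configurations (Moulton, Albouy). A continuity and isolation argument combined with Lemma \ref{lammaHomographic4} should promote this asymptotic vanishing to $\rho\equiv 0$, contradicting the assumption. For $B>0$ and $H<0$ (case (c)), $I$ oscillates between $I_{\min}$ and $I_{\max}$; here the plan is to evaluate the identity at the two turning points $\dot I=0$ in each period and to argue that no nontrivial closed shape curve on $\{U=\mu\}\subset S^2$ is consistent with the periodicity of $I(t)$.

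The main obstacle, as the structure of the paper already signals, is the negative-energy case (c). There the scale does not decouple, the shape loop on $\{U=\mu\}$ can pass near the rectilinear critical points of $U$ on $S^2$ where the reduced flow has a resonant linearisation (the analysis postponed to Section \ref{break}), and the displayed identity degenerates near those critical points. A complete proof seems to require a delicate monodromy/irrationality argument relating the radial period of $I(t)$ to the transit time along the shape loop — which is presumably why the paper's Theorems~5--8 handle only subcases (orbits bounded away from rectilinear scattering, equal masses with $H\ge 0$, large angular momentum, or initial conditions near the equilateral configuration) rather than the full range claimed by Conjecture \ref{conjecture2}.
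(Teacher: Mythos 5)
The statement you were asked to prove is labelled a \emph{conjecture} in the paper, and the paper does not prove it: Theorems 5--8 establish it only under additional hypotheses (orbits avoiding the critical paths to rectilinear central configurations, equal masses with $H\ge 0$ and $a=1$, large angular momentum, or initial shapes near equilateral). Your proposal is, accordingly, a plan rather than a proof, and you are commendably candid about the negative-energy case being out of reach. Your overall strategy does track the paper's: Fujiwara coordinates, the constraint that the shape lies on a level set of $U(Q)$, the trichotomy in $I(t)$ from Section \ref{secEvolutionOfI}, and the forcing of the shape towards central configurations when $I\to\infty$. The paper's actual mechanism, however, is different in one essential respect: it does not differentiate $U(Q)=\mu$ twice, but instead uses Lemma \ref{lemmaSST} (the inversely similar triangles of $Q_k$ and $P_k$) to pin down the momenta of any non-homographic candidate as $P_k=i\epsilon(\kappa/\rho)G_k$, and then derives the single scalar compatibility condition (\ref{theCondition2}) relating $I(q)$ to $f_1/\rho^4+f_2/\rho^2$ and $C/\rho$; all of Theorems 5--8 are violations of that one identity in various regimes.

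The concrete gap in your argument for case (b) ($B>0$, $H\ge 0$) is the sentence claiming that ``a continuity and isolation argument combined with Lemma \ref{lammaHomographic4} should promote this asymptotic vanishing to $\rho\equiv 0$.'' This does not work. Lemma \ref{lammaHomographic4} requires $G_k=0$ for \emph{all} time; a putative non-homographic orbit could have $\rho(t)>0$ for every finite $t$ while $\rho\to 0$ as $t\to\infty$, i.e.\ it could scatter asymptotically towards a rectilinear central configuration along one of the critical paths $\mu=\mu_c^{(k)}$. Isolation of the limit configuration does not force the orbit to sit on it. The equilateral limits are excluded cheaply because they are isolated minima of $\mu(Q)$ and a non-equilateral shape on the level set $\mu=$ const cannot approach them, but the rectilinear configurations are saddles of $\mu$ and the level set passes through them. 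Excluding that scattering scenario is precisely the hard content of the paper: it requires the explicit series expansions of $f_1$, $f_2$, $\rho$ along the critical path (Section \ref{break}) and the matched asymptotics of $\rho(Q)\sqrt{I(q)}$ in the fictitious time $\tau$ (Section \ref{two}), and even then only for equal masses in the Newtonian case. Your treatment of case (c) ($H<0$) is a list of hopes, which is fair, since the paper also only obtains the perturbative Theorems \ref{momentum} and \ref{equilateral} there; but it should not be presented as part of a proof of the conjecture.
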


As shown above, Conjecture \ref{conjecture1} and Conjecture \ref{conjecture2} are equivalent in the planar 3-body problem for homogeneous potentials with $a>-2$ and in the planar 4-body 
problem for the Newtonian potential.
In the following sections, we will investigate the properties of the planar three-body motion with $\mu=$ constant, $B-C^2>0$, and $\rho \ne 0$.

Before closing this section, let us simplify our equations. For this we consider a fictitious (scaled) time variable, $\tau$, defined as
\begin{equation}
\label{defOfTau}
d\tau = \frac{dt}{I(q)},
\end{equation}
and introduce what we call Fujiwara momenta by taking
\begin{equation}
P_k = m_k \frac{dQ_k}{d\tau} = I(q) m_k \frac{dQ_k}{dt}.
\end{equation}
These simplifications are suggested by equations
(\ref{defOfQ}) and (\ref{kineticEnergyForShapeOriginal}).
Indeed, for solutions with $\mu=$ constant, the corresponding Fujiwara kinetic energy becomes
\begin{equation}
\label{kineticEnergyForShape}
\sum \frac{|P_k|^2}{m_k}
=B-C^2
= {\rm constant},
\end{equation}
and the equations of motion take the form
\begin{equation}
\label{equationOfMotion}
\frac{dP_k}{d\tau}
=-2iC P_k
	+I(q)^{(2-a)/2}G_k(Q)
	-(B-C^2)m_k Q_k.
\end{equation}

%%%%%%%%%%%%%%%%%%%
%%%%%%%%%%%%%%%%%%%
%%%%%%%%%%%%%%%%%%%
\section{Candidates for non-homographic solutions}
\label{candidate}
%%%%%%%%%%%%%%%%%%%
%%%%%%%%%%%%%%%%%%%
%%%%%%%%%%%%%%%%%%%
Since our goal is to show that constant configurational measure solutions of the planar three-body problem are homographic, we will seek candidates for constant configurational measure non-homographic solutions, aiming to prove that they don't exist. 
Some simple algebra will show that for each such candidate in Fujiwara coordinates, $Q_k$, there are only two possible Fujiawara momenta, $P_k$ and $-P_k$. We start with the following result.

\begin{lemma}
\label{lemmaSST}
If six given complex quantities, $\xi_k$ and $\eta_k, k=1,2,3,$ satisfy the properties $\sum \xi_k^\dag \eta_k=0$, $\sum \eta_k =0$, and $\sum m_j m_k |\xi_j - \xi_k|^2>0$, then there is a complex number $\zeta$,
such that 
$\eta_\ell = \zeta (\xi_j^\dag - \xi_k^\dag)$.
\end{lemma}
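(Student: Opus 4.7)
The plan is to recognise the statement as a one-dimensional linear-algebra conclusion. Fixing the $\xi_k$, the two constraints $\sum_k \eta_k = 0$ and $\sum_k \xi_k^\dag \eta_k = 0$ are $\mathbb{C}$-linear equations on $\eta = (\eta_1,\eta_2,\eta_3) \in \mathbb{C}^3$; since the conjugation falls on the $\xi_k$, not on the $\eta_k$, both constraints really are $\mathbb{C}$-linear, not merely $\mathbb{R}$-linear. First I would verify that the two coefficient vectors $(1,1,1)$ and $(\xi_1^\dag,\xi_2^\dag,\xi_3^\dag)$ are $\mathbb{C}$-linearly independent: the positivity hypothesis $\sum m_j m_k |\xi_j - \xi_k|^2 > 0$, together with $m_k > 0$, forces some $\xi_j \ne \xi_k$, so the $\xi_k^\dag$ cannot all agree, so the second vector is not a scalar multiple of the first. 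Hence the joint kernel of the two functionals is exactly one-dimensional over $\mathbb{C}$.

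Next I would exhibit an explicit nonzero element of this kernel. Set
\[
\eta^*_1 = \xi_2^\dag - \xi_3^\dag, \qquad \eta^*_2 = \xi_3^\dag - \xi_1^\dag, \qquad \eta^*_3 = \xi_1^\dag - \xi_2^\dag.
\]
One sees at a glance that $\eta^*_1+\eta^*_2+\eta^*_3 = 0$, and the sum $\sum_k \xi_k^\dag \eta^*_k$ is a telescoping expression that also vanishes. Since the $\xi_k$ are not all equal, at least one of the three differences is nonzero, so $\eta^* \ne 0$. Thus $\eta^*$ spans the joint kernel, and therefore there exists a unique $\zeta \in \mathbb{C}$ with $\eta = \zeta \eta^*$. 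Reading off the three components is exactly the asserted formula $\eta_\ell = \zeta(\xi_j^\dag - \xi_k^\dag)$ with $(j,k,\ell)$ running through the cyclic permutations of $(1,2,3)$.

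I do not expect any substantial obstacle. The only conceptual care point is the distinction between $\mathbb{C}$-linearity and $\mathbb{R}$-linearity in the constraint $\sum \xi_k^\dag \eta_k = 0$: if conjugation had fallen on the $\eta_k$ the dimension count would have to be performed over $\mathbb{R}$ in $\mathbb{R}^6$, which would give a real two-dimensional solution space rather than the desired complex one-parameter family. Here the convention in the lemma puts the conjugation on the $\xi$-side, so the argument goes through cleanly.
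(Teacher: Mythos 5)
Your proof is correct and follows essentially the same route as the paper's: both identify the solution set of the two $\mathbb{C}$-linear constraints as the one-dimensional span of the vector $(\xi_2^\dag-\xi_3^\dag,\ \xi_3^\dag-\xi_1^\dag,\ \xi_1^\dag-\xi_2^\dag)$. The paper gets there by eliminating $\eta_3$ and solving for the common ratio $\zeta=\eta_1/(\xi_2^\dag-\xi_3^\dag)=\eta_2/(\xi_3^\dag-\xi_1^\dag)$ after observing that at least two of the differences are nonzero, whereas you use a rank count together with an explicit spanning vector of the kernel; the mathematical content is identical.
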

\begin{proof}
Since $\sum m_j m_k |\xi_j - \xi_k|^2>0$,
at least two of the quantities $\xi_j - \xi_k$ are not zero, say $\xi_2-\xi_3\ne 0$ and $\xi_3 - \xi_1\ne0$.
From $\eta_3=-\eta_1-\eta_2$, it follows that
$0=\sum \xi_k^\dag \eta_k=(\xi_1^\dag-\xi_3^\dag)\eta_1+(\xi_2^\dag-\xi_3^\dag)\eta_2$.
Therefore,
\[
\frac{\eta_1}{\xi_2^\dag-\xi_3^\dag}=\frac{\eta_2}{\xi_3^\dag-\xi_1^\dag}.
\]
Then the value of this ratio is the number $\zeta$ we are seeking. Indeed, we have
$\eta_1=\zeta(\xi_2^\dag-\xi_3^\dag)$ and $\eta_2=\zeta(\xi_3^\dag-\xi_1^\dag)$,
so
$\eta_3=-(\eta_1+\eta_2)$ yields
$\eta_3=\zeta(\xi_1^\dag-\xi_2^\dag)$.
\end{proof}

A geometrical interpretation of this lemma is as follows \cite{ffkoySST}.
For the three-body problem, in coordinates having the origin at the centre of mass of the particle system (which implies that $\sum p_k=0$), if the moment of inertia is constant and the angular momentum is zero (i.e.\ $\sum q_k^\dag p_k=0$) and no triple-collision occurs (i.e.\ $I>0$), then the triangle whose vertices are $q_1, q_2, q_3$ and the triangle whose perimeters are $p_1, p_2, p_3$
are inversely similar (i.e.\ there exists a $\zeta$ such that $p_\ell=\zeta (q_j^\dag-q_k^\dag$)).
%%%%%%%%

Since $\sum Q_k^\dag P_k =0$, $\sum P_k=0$, and $I(Q)=1$,
Lemma \ref{lemmaSST} applies to the variables $Q_k$ and $P_k$.
Therefore there exist a non-negative variable $\kappa$ and a
real variable $\phi$ such that
\begin{equation}
\label{SSTforQP}
P_\ell = \kappa e^{i\phi} (Q_j^\dag-Q_k^\dag),
\end{equation}
with $(j, k, \ell) = (1,2,3)$, $(2,3,1)$, $(3,1,2)$.
From the fact that
\[
\sum \frac{|P_\ell|^2}{m_\ell}
	= \frac{\kappa^2}{m_1 m_2 m_3}
		\sum m_j m_k |Q_j-Q_k|^2
	= \frac{M\kappa^2}{m_1 m_2 m_3},
\]
we can obtain the value for $\kappa$, which turns out to be constant. Indeed,
\begin{equation}
\label{defOfKappa}
	\kappa
		=\sqrt{
			\frac{m_1 m_2 m_3}{M}
			\sum \frac{|P_\ell|^2}{m_\ell}
			}
		=\sqrt{
			\frac{m_1 m_2 m_3(B-C^2)}{M}
			}.
\end{equation}

It is easy to check that
$Q_k$ and $G_k$ also satisfy
\[
\sum Q_k^\dag G_k(Q)=0.
\]
Thus, by 
Lemma \ref{lemmaSST}, there exist a positive value $\rho$ and a real variable $\psi$ such that
\begin{equation}
\label{SSTforQG}
G_\ell(Q) = \rho e^{i\psi}(Q_j^\dag - Q_k^\dag),
\end{equation}
with
\begin{equation}
\label{defOfRho}
\rho
=\sqrt{
	\frac{m_1 m_2 m_3}{M}
	\sum \frac{|G_\ell|^2}{m_\ell}
	}.
\end{equation}

Combining (\ref{SSTforQP}) and (\ref{SSTforQG}), and assuming $\rho\ne0$, we obtain the following relationship between $P_k$ and $G_k$:
\begin{equation}
\label{SSTforPG}
P_k
=\frac{\kappa}{\rho}e^{i(\phi-\psi)}G_k.
\end{equation}
It is important to note that the scale factor $\kappa/\rho$ and the phase factor $\exp{i(\phi-\psi)}$ are the same for all $k=1,2,3$.

Now consider the condition
\[
\mu
=U(q)I(q)^{a/2}
=U(Q)
={\rm constant}.
\]
Differentiation with respect to $\tau$ yields
\[
\begin{split}
0
&=-\frac{dU(Q)}{d\tau}
=\sum \frac{1}{m_k}
	P_k \cdot g_k(Q)
=\sum \frac{1}{m_k}
	P_k \cdot G_k(Q)\\
&=\frac{\kappa}{\rho}
	\sum \frac{|G_k(Q)|^2}{m_k}
	\cos (\phi-\psi)\\
&=\sqrt{
	(B-C^2)
	\sum \frac{|G_k(Q)|^2}{m_k}
	}
	\; \cos (\phi-\psi).
\end{split}
\]
If the orbit is not homographic,
then $B \ne C^2$ and $G_k \ne 0$,
therefore,
$\cos (\phi-\psi)=0$.
From (\ref{SSTforPG}) we can then conclude that
\begin{equation}
\label{theCandidate}
P_k = i\epsilon \frac{\kappa}{\rho} G_k,
\mbox{ with }
\epsilon = \pm 1,
\kappa \ne 0,
G_k \ne 0. 
\end{equation}
We have thus proved the following result.

\begin{lemma}
\label{keylemma}
Consider a solution $q_k, k=1,2,3$, of the planar three-body problem that is non-homographic and has constant configurational measure, and let $Q_k, k=1,2,3$, be its corresponding Fujiwara coordinates.
Then the corresponding Fujiwara momenta must be of the form 
(\ref{theCandidate}), with $k=1,2,3$.
\end{lemma}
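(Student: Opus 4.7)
The plan is to apply Lemma~\ref{lemmaSST} twice — once to the pair $(Q_k, P_k)$ and once to $(Q_k, G_k(Q))$ — and then use the constancy of $\mu$ to fix the relative phase between the two resulting complex scalars.

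First I would verify the hypotheses of Lemma~\ref{lemmaSST} for each pair. For $(Q_k, P_k)$: the identity $\sum Q_k^\dag P_k = 0$ comes from (\ref{sumOfQP}) rescaled by $I$; the sum $\sum P_k = 0$ follows from differentiating $\sum m_k Q_k = 0$ in $\tau$; and $\sum m_j m_k |Q_j - Q_k|^2 = M\,I(Q) = M > 0$ since triple collision is excluded. For $(Q_k, G_k)$: the condition $\sum G_k = 0$ follows from $\sum g_k = 0$ (Newton's third law) together with $\sum m_k Q_k = 0$; while $\sum Q_k^\dag G_k = 0$ follows from the Euler homogeneity identity $\sum Q_k \cdot g_k(Q) = -a\,U(Q) = -a\mu$ combined with the normalisation $\sum m_k |Q_k|^2 = 1$, inserted into the definition (\ref{defOfG}). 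The vanishing of the imaginary part $\sum Q_k \wedge g_k = 0$ is the torque identity for central pair interactions.

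Lemma~\ref{lemmaSST} then produces complex scalars $\kappa e^{i\phi}$ and $\rho e^{i\psi}$ such that
\[
P_\ell = \kappa e^{i\phi}(Q_j^\dag - Q_k^\dag) \quad\text{and}\quad G_\ell(Q) = \rho e^{i\psi}(Q_j^\dag - Q_k^\dag)
\]
for cyclic $(j,k,\ell)$. Evaluating $\sum |P_\ell|^2/m_\ell$ from this expression yields formula (\ref{defOfKappa}) for $\kappa$, which together with (\ref{kineticEnergyForShape}) shows $\kappa$ is constant. Since the orbit is non-homographic, Lemma~\ref{lammaHomographic2} forces $B > C^2$, so $\kappa \ne 0$; and Lemma~\ref{lammaHomographic4} (applied in the planar three-body setting with $a > -2$) forces $G_k \not\equiv 0$, so $\rho \ne 0$. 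Dividing the two expressions gives $P_k = (\kappa/\rho)\,e^{i(\phi - \psi)}\,G_k(Q)$ with the same scalar and phase factor for all three indices simultaneously.

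The final step is to show $\cos(\phi - \psi) = 0$. Differentiating $\mu = U(Q)$ with respect to the fictitious time $\tau$ of (\ref{defOfTau}) gives
\[
0 = \frac{d U(Q)}{d\tau} = \sum \frac{P_k}{m_k} \cdot g_k(Q) = \sum \frac{P_k}{m_k} \cdot G_k(Q),
\]
where the last equality uses $\sum P_k \cdot Q_k = \tfrac{1}{2}\,dI(Q)/d\tau = 0$ and the definition (\ref{defOfG}). Substituting $P_k = (\kappa/\rho) e^{i(\phi - \psi)} G_k$ and taking the real part gives
\[
0 = \frac{\kappa}{\rho}\cos(\phi - \psi) \sum \frac{|G_k(Q)|^2}{m_k},
\]
so $\cos(\phi - \psi) = 0$ and hence $e^{i(\phi - \psi)} = i\epsilon$ with $\epsilon = \pm 1$, giving exactly (\ref{theCandidate}). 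The main obstacle I anticipate is the verification of $\sum Q_k^\dag G_k(Q) = 0$: it requires combining the Euler identity $\sum Q_k \cdot g_k = -aU$ with the normalisation $\sum m_k |Q_k|^2 = 1$ so that the correction term $a\mu m_k Q_k$ appearing in $G_k$ exactly cancels the $-a\mu$ contribution from $g_k$. This cancellation is the one non-formal algebraic check on which the entire argument hinges; everything else is bookkeeping and the two invocations of Lemma~\ref{lemmaSST}.
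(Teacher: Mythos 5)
Your proposal is correct and follows essentially the same route as the paper: two applications of Lemma~\ref{lemmaSST} (to $(Q_k,P_k)$ and to $(Q_k,G_k)$), the identification of $\kappa$ as the constant $\sqrt{m_1m_2m_3(B-C^2)/M}$, and the differentiation of $\mu=U(Q)$ in $\tau$ to force $\cos(\phi-\psi)=0$. The only difference is that you spell out the verification of $\sum Q_k^\dag G_k=0$ via the Euler homogeneity and torque identities, which the paper leaves as ``easy to check''; that check is exactly right.
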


By Lemma \ref{keylemma}, our goal can be redefined as aiming to show that no Fujiwara momenta of the form (\ref{theCandidate}) can satisfy the equations of motion (\ref{equationOfMotion}).
%%%%%%%%%%%%%%%%%%%%%%%%%%%%%%%%%% 
%%%%%%%%%%%%%%%%%%%%%%%%%%%%%%%%%%
%%%%%%%%%%%%%%%%%%%%%%%%%%%%%%%%%%
\section{A condition for the non-homographic candidate}
\label{condition}
%%%%%%%%%%%%%%%%%%%%%%%%%%%%%%%%%%
%%%%%%%%%%%%%%%%%%%%%%%%%%%%%%%%%%
%%%%%%%%%%%%%%%%%%%%%%%%%%%%%%%%%%
In this section, we derive a condition for the non-homographic candidate (\ref{theCandidate}) to satisfy the equation of motion (\ref{equationOfMotion}). This condition, see (\ref{theCondition2}),
 will be later useful for proving our theorems about Saari's homographic conjecture.

By differentiating equation (\ref{SSTforQP}) with respect to $\tau$,
we have
\begin{equation}
\label{equationOfMotion2}
\begin{split}
\frac{dP_\ell}{d\tau}
&=\kappa e^{i\phi}
	\left\{
		i\frac{d\phi}{d\tau} (Q_j^\dag-Q_k^\dag)
		+
		\left(
			\frac{P_j^\dag}{m_j}-\frac{P_k^\dag}{m_k}
		\right)
	\right\}\\
&=i\frac{d\phi}{d\tau}P_\ell
	-(B-C^2)m_\ell Q_\ell.
\end{split}
\end{equation}
Comparing this equation with (\ref{equationOfMotion}),
we obtain the condition
\[
G_k 
	\left(
\epsilon \frac{d\phi}{d\tau}
		+2\epsilon C
		+\frac{\rho}{\kappa}I^{(2-a)/2}
	\right)
	\frac{\kappa}{\rho}
=0.
\]
Since we are analysing the motion with $G_k \ne 0$ and $\kappa \ne 0$, the necessary and sufficient condition for the candidate (\ref{theCandidate}) to satisfy the equation of motion (\ref{equationOfMotion}) is
\begin{equation}
\label{theCondition}
\epsilon \frac{d\phi}{d\tau}
		+2\epsilon C
		+\frac{\rho}{\kappa}I^{(2-a)/2}
=0.
\end{equation}

To have an explicit expression for $e^{i\phi}$, we consider the quantities
\[
\gamma_{jk}=m_jQ_jP_k - m_k Q_k P_j.
\]
Since $\sum P_k=0$ and $\sum m_k Q_k=0$,
we have $\gamma_{12}=\gamma_{23}=\gamma_{31}$.
With the help of equation (\ref{SSTforQP}), we obtain
\[
\gamma_{12}
=m_1Q_1P_2-m_2Q_2P_1
=-\kappa e^{i\phi}.
\]
Using equation (\ref{theCandidate}), it follows that
\[
\gamma_{12}
=i\epsilon \frac{\kappa}{\rho}
	\left(
		m_1Q_1G_2-m_2Q_2G_1
	\right)
=-i\epsilon \frac{\kappa}{\rho}
	m_1 m_2 m_3
	\sum \frac{1}{r_{jk}^{a+2}}(Q_j-Q_k)Q_\ell.
\]
Thus,
\[
e^{i\phi}
=i\epsilon \frac{m_1m_2m_3}{\rho}
	\sum \frac{1}{r_{jk}^{a+2}}(Q_j-Q_k)Q_\ell
\]
and
\begin{equation}
\begin{split}
\epsilon\frac{d\phi}{d\tau}
&= m_1m_2m_3 e^{-i\phi}
	\frac{d}{d\tau}
	\left(
		\frac{1}{\rho}
		\sum \frac{1}{r_{jk}^{a+2}}(Q_j-Q_k)Q_\ell
	\right).
\end{split}
\end{equation}
A straightforward computation, which uses the equality $m_k dQ_k/d\tau = i\epsilon\kappa G_k/\rho$ and equation (\ref{drij}) for $dr_{jk}/d\tau$, yields
\[
\begin{split}
\epsilon \frac{d\phi}{d\tau}
=(a+2)
	&\frac{m_1^2m_2^2m_3^2\Delta^2}{M}
	\frac{\kappa}{\rho^3}
	\sum \frac{m_\ell}{r_{jk}^{a+4}}
	\left(
		\frac{1}{r_{k\ell}^{a+2}}-\frac{1}{r_{\ell j}^{a+2}}
	\right)^2\\
	&+
	\frac{\kappa}{\rho}
	\left(
		2a\mu - \sum \frac{m_j+m_k}{r_{jk}^{a+2}}
	\right),
\end{split}
\]
where $\Delta$ is twice the oriented area of the triangle $Q_1Q_2Q_3$,
\[
\Delta=Q_1\wedge Q_2+Q_2\wedge Q_3+Q_3\wedge Q_1.
\]
Therefore condition (\ref{theCondition}) becomes
\begin{equation}
\label{theCondition2}
-I^{(2-a)/2}(q)
=\frac{m_1^2m_2^2m_3^2}{M^2}(B-C^2)
	\left(
		\frac{f_1}{\rho^4}+\frac{f_2}{\rho^2}
	\right)
	+
	2\epsilon C
	\sqrt{\frac{m_1m_2m_3}{M}(B-C^2)}
	\;\frac{1}{\rho},
\end{equation}
with
\[
f_1
=(a+2)m_1 m_2 m_3\Delta^2
	\sum \frac{m_\ell}{r_{jk}^{a+4}}
	\left(
		\frac{1}{r_{k\ell}^{a+2}}
		-
		\frac{1}{r_{\ell j}^{a+2}}
	\right)^2
\]
and
\[
f_2
=\frac{M}{m_1m_2m_3}
 	\left(
		2a\mu 
		-\sum \frac{m_j+m_k}{r_{jk}^{a+2}}
	\right).
\]
Note that the variables $r_{jk}$ and $\rho$ on the right hand side of equation (\ref{theCondition2}) are functions of $Q$ while the moment of inertia, $I$, on the left hand side depends on $q$.

It is important to remark that condition (\ref{theCondition2}) will be crucial in our further understanding of Saari's homographic conjecture. We will refer to it throughout the rest of this paper.

%%%%%%%%%%%%%%%%%%%
%%%%%%%%%%%%%%%%%%%
%%%%%%%%%%%%%%%%%%%
\section{Saari's homographic conjecture for non-negative energy, I}
\label{one}
%%%%%%%%%%%%%%%%%%%
%%%%%%%%%%%%%%%%%%%
%%%%%%%%%%%%%%%%%%%
In this section we will prove that Saari's homographic conjecture of the three-body problem is true for $0<a<2$ in the non-negative energy case if the solutions do not scatter to infinity in a particular way, namely by tending towards one of the three possible rectilinear central configurations. We do not claim that the conjecture is false and such motions occur, but at this point we cannot overcome the technical difficulties required to prove the result in general. (In Section \ref{two},
we will prove that such motions do not occur in the Newtonian case when all masses are equal.)

Recall from Section \ref{kineticSection} that there are three rectilinear central configurations in the three-body problem.
We will call them the rectilinear central configurations 1, 2, and 3,
in agreement with the index of the middle mass, namely
$(m_3,m_1,m_2)$, $(m_1,m_2,m_3)$, and $(m_2,m_3,m_1)$,
respectively. Each case corresponds to a certain value of the configurational measure; we call them the critical values of $\mu$ and denote them by $\mu_c^{(1)}$, $\mu_c^{(2)}$, and $\mu_c^{(3)}$.

For each $k=1,2,3$,  $\mu(Q)=\mu_c^{(k)}$ defines solutions that
pass through the rectilinear central configuration $k$.
We call such solutions ``critical paths,'' see Figure \ref{figCriticalPathsForGeneralMasses}, where we have taken
$q_1=(-1,0)$, $q_2=(1,0)$ and $q_3=(x,y)$. This can be done without loss of generality because $\mu(Q)$ is invariant under translation, rotation, and scaling of the original variables $q_k$.

\begin{figure}[htbp] 
   \centering
   \includegraphics[width=2.2in]{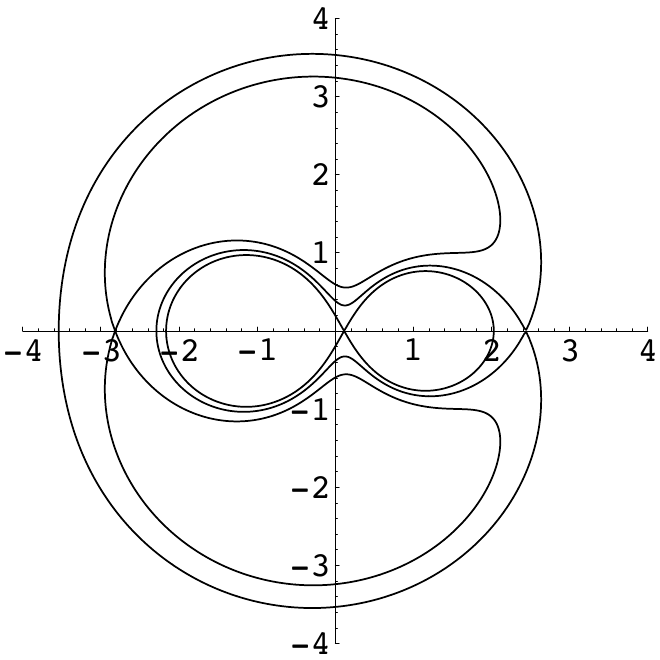}
   \hspace{0.2cm}
    \includegraphics[width=2.2in]{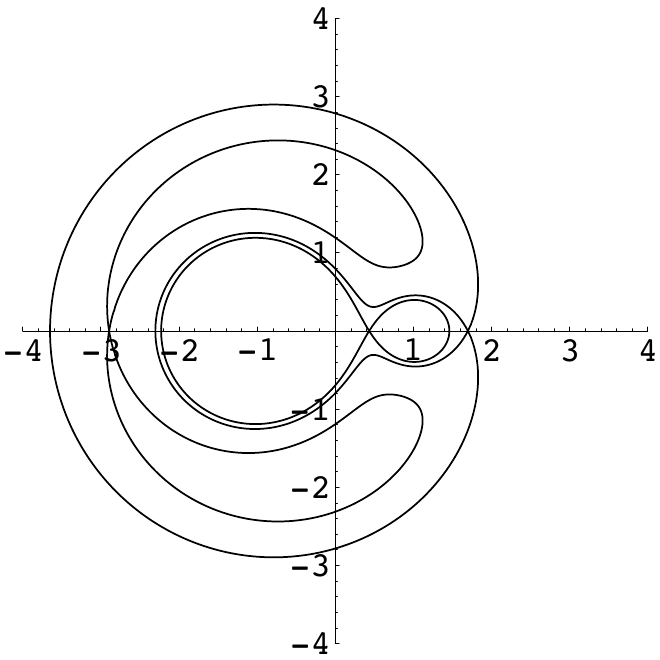} 
   \caption{The critical paths for $a=1$,
   $q_1=(-1,0)$, $q_2=(1,0)$ and $q_3=(x,y)$.
   The left: $m_1=4$, $m_2=2$, $m_3=1$.
   The right: $m_1=1000$, $m_2=100$, $m_3=1$.
   }
   \label{figCriticalPathsForGeneralMasses}
\end{figure}

We can now prove the main result of this section.

\begin{theorem}
\label{mainPropositionForNonNegativeEnergy}
In the planar three-body problem with non-negative energy for a potential satisfying $0<a<2$, Saari's homographic conjecture is true if the configuration is not on any of the critical paths given by  $\mu = \mu_c^{(k)}$.
\end{theorem}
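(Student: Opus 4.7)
The plan is to use condition~(\ref{theCondition2}) to derive a contradiction: under the stated hypotheses I will argue that its left-hand side $-I^{(2-a)/2}(q)$ tends to $-\infty$, while its right-hand side stays uniformly bounded. First I would assume for contradiction the existence of a non-homographic solution $q(t)$ with $\mu \equiv \mu_0$ constant, $H \geq 0$, and $\mu_0 \neq \mu_c^{(k)}$ for every $k = 1,2,3$. By Lemma~\ref{lammaHomographic2}, non-homographicity forces $B > C^2 \geq 0$, hence $B > 0$, so the orbit falls into case (b) of Section~\ref{secEvolutionOfI}. That case ensures $I(t) \to \infty$ as $t \to \infty$ while the mutual distances stay bounded away from zero. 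Lemma~\ref{keylemma} then forces the Fujiwara momenta to take the form~(\ref{theCandidate}), so the necessary condition~(\ref{theCondition2}) must hold for all $\tau$.

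Next I would analyse the right-hand side of~(\ref{theCondition2}) as a function on shape space. Because $U(Q) = \mu(q) = \mu_0$, the trajectory $Q(\tau)$ is confined to the level set $L = \{Q : U(Q) = \mu_0\}$ on the shape sphere. Since $U \to \infty$ at the binary-collision points, $L$ is bounded away from them, so $L$ is a compact subset of the smooth locus. The vanishing set of $\rho(Q)$ is exactly the set of central configurations, whose $\mu$-values on the shape sphere are the three rectilinear levels $\mu_c^{(1)}, \mu_c^{(2)}, \mu_c^{(3)}$ and the equilateral minimum $\mu_c^{(e)}$. The hypothesis excludes the rectilinear values; moreover, if $\mu_0 = \mu_c^{(e)}$ then $L$ reduces to the two equilateral points alone, forcing $Q(\tau)$ to be constant, so by Lemma~\ref{lammaHomographic1} the motion is homographic, contradicting the assumption. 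Therefore $L$ avoids every central configuration, and by compactness and continuity $\rho$ attains a positive minimum on $L$, while $f_1(Q)$ and $f_2(Q)$ are bounded on $L$.

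Combining these observations, the right-hand side of~(\ref{theCondition2}) stays uniformly bounded along the orbit, whereas the left-hand side tends to $-\infty$ because $I \to \infty$ and $(2-a)/2 > 0$. This contradiction will finish the proof. I expect the main obstacle to be the precise identification of the critical values of $\mu$ on the shape sphere and the resulting positive lower bound on $\rho$; this hinges on the Moulton--Albouy classification of planar three-body central configurations recalled around Lemma~\ref{lammaHomographic4}, together with the observation that the equilateral level set is zero-dimensional. The compactness of $L$ and the boundedness of $f_1$ and $f_2$ on $L$ are then soft continuity consequences of the fact that $U$ blows up at collisions.
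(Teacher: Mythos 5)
Your proposal is correct and follows essentially the same route as the paper: both arguments rest on condition~(\ref{theCondition2}), with the left-hand side $-I^{(2-a)/2}$ diverging because $I\to\infty$ in case (b), and the right-hand side remaining bounded because $f_1$, $f_2$ are controlled by the bounds~(\ref{inequalityForQ}) and $\rho$ stays away from zero off the critical paths (the paper phrases this contrapositively — the right-hand side can blow up only if $\rho\to 0$, forcing asymptotic approach to a rectilinear central configuration — while you phrase it via compactness of the level set $\{U(Q)=\mu_0\}$, which is a minor stylistic sharpening rather than a different method). Your explicit treatment of the equilateral level $\mu_0=\mu_c^{(e)}$ is a small refinement of the paper's remark that the equilateral configurations are isolated minima of $\mu$.
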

\begin{proof}
From equations (\ref{eqPhi}), (\ref{eqMotForI}), and (\ref{eqIntegralForI})
for $0<a<2$ and $H\ge 0$, it follows that $I \to \infty$ when $t \to \infty$.
Then, from the left hand side of equation (\ref{theCondition2}), we conclude that $I^{(2-a)/2} \to \infty$.

We can omit now from our considerations all collision solutions.
Indeed, we already proved that Saari's homographic conjecture is true
for collision orbits for homogeneous potentials with $0<a<2$, see Theorem \ref{collisionOrbitIsHomothetic}.
Therefore, we can restrict our analysis to collision-free solutions.

By (\ref{inequality}), the mutual distances in Fujiwara coordinates are bounded from below and from above by constants,
\begin{equation}
\label{inequalityForQ}
\left(\frac{m_j m_k}{a\mu}\right)^{2/a}
\le r_{jk}^2(Q)=|Q_j-Q_k|^2 \le
\frac{M}{m_j m_k}.
\end{equation}
Then $f_1$ and $f_2$ are finite in the right hand side of equation (\ref{theCondition2}).
To make the right hand side of this equation infinite, let
$\rho \to 0$ (namely, $G_k \to 0$) for $t \to \infty$.
%%%%%%%%
Consequently every solution must asymptotically approach a central configuration.
Since the equilateral central configurations are isolated minima of the configurational measure (see e.g. \cite{wintner}), the orbit cannot tend to an equilateral triangle. Therefore, the only possibility is that the particle system tends to one of the rectilinear central configurations along a critical path. Thus the configuration must belong to one of the critical paths defined by a configurational measure that takes the value $\mu_c^{(k)}, k=1,2,3$. This completes the proof.
\end{proof}

An alternative proof of the theorem above was proposed by Alain Albouy and Alain Chenciner after they read an earlier version of this paper. The main advantage of their proof is  that it holds for any number of  bodies. Their proof is a direct consequence of Chazy's expansion \cite{Chazy}. We hope that Chenciner and Albouy will publish their proof somewhere else.

%%%%%%%%%%%%%%%%%%%
%%%%%%%%%%%%%%%%%%%
%%%%%%%%%%%%%%%%%%%
\section{Analytic behaviour near the rectilinear central configurations}
\label{nearTheCollinearAlongTheCriticalPath}
\label{break}
%%%%%%%%%%%%%%%%%%%
%%%%%%%%%%%%%%%%%%%
%%%%%%%%%%%%%%%%%%%
In this section, we consider the behaviour of the functions $f_1$, $f_2$, and $\rho$ near the rectilinear central configurations along the critical path in the Newtonian ($a=1$) three-body equal-mass case. Without loss of generality, we can take $m_1=m_2=m_3=1$.
Then the three critical paths are identical and have the same
configurational measure, $\mu(Q)=5/\sqrt{2}$ (see Figure \ref{figCriticalPath}).
\begin{figure}[htbp] %  figure placement: here, top, bottom, or page
   \centering
   \includegraphics[width=2in]{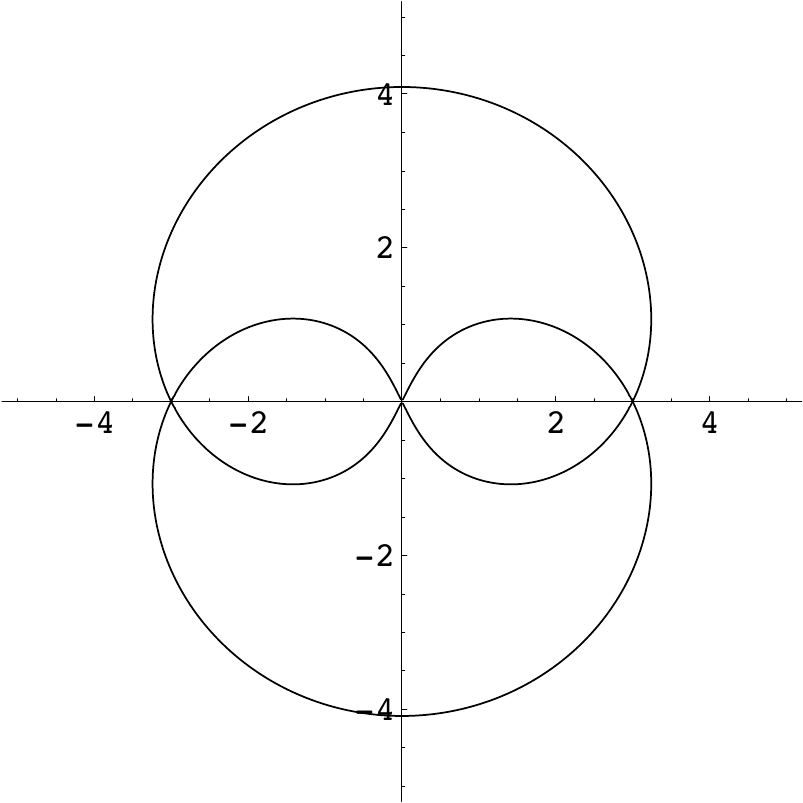} 
   \caption{The critical path for $m_1=m_2=m_3=1$, $a=1$,
   $q_1=(-1,0)$, $q_2=(1,0)$, and $q_3=(x,y)$.}
   \label{figCriticalPath}
\end{figure}

Let us observe that $f_1(Q)$, $f_2(Q)$, and $\rho(Q)$ are
functions of $r_{jk}(Q)=|Q_j-Q_k|=|q_j-q_k|/\sqrt{I(q)}$,
therefore depend on the shape of the triangle $q_1q_2q_3$,
where the shape is a similarity class of the triangle.
We say that two triangles belong to the same similarity class
if they have the same shape.
Therefore without loss of generality we can fix the shape of the triangle
by taking $q_1=(-1,0)$, $q_2=(1,0)$ and $q_3=(x,y)$.

Then 
\[
Q_k=q_k \sqrt{\frac{3}{6+2(x^2+y^2)}}
\]
and
\[
\mu(Q)
=\sqrt{\frac{3+x^2+y^2}{6}}
	\left(
		1+
		\frac{2}{\sqrt{(x-1)^2+y^2}}+
		\frac{2}{\sqrt{(x+1)^2+y^2}}
	\right).
\]
The three rectilinear central configurations are given by
$(x,y)=(-3,0)$, $(0,0)$, and $(3,0)$.
These configurations are mutually equivalent, so the behaviour near $(0,0)$ is the same as near $(\pm 3,0)$. Therefore we will further investigate only the behaviour near the origin.

Notice that in the neighbourhood of $(0,0)$, $\mu(Q)$ is of the form
\[
\mu(Q)
=\frac{5}{\sqrt{2}}
+\frac{29x^2-7y^2}{6\sqrt{2}}
+\frac{331x^4-850x^2 y^2 + 79y^4}{72\sqrt{2}}
+\mbox{ higher order terms}.
\]
Therefore the critical path $\mu(Q)=5/\sqrt{2}$ near the origin behaves like
\begin{equation}
\label{seriesFortheCriticalPath}
y^2
=\frac{29}{7}x^2-\frac{7491}{343}x^4+O(x^6).
\end{equation}
Observe that the function $\rho^2$ along the critical path is of the form
\begin{equation}
\label{rho2SeriesFortheCriticalPath}
\begin{split}
\rho^2(Q)
&=\frac{841x^2+49y^2}{18}
+\frac{11281x^4 - 7570x^2 y^2 - 455y^4}{54}
+\dots\\
&=58x^2-\frac{8063}{14}x^4+O(x^6),
\end{split}
\end{equation}
where for the last line we have used the series corresponding to the critical path (\ref{seriesFortheCriticalPath}).
Notice further that
\begin{equation}
\begin{split}
f_1(Q)
&=147\sqrt{2}y^2+
	\frac{(2377x^2-1400y^2)y^2}{\sqrt{2}}
	+\mbox{ higher order terms}\\
&=609\sqrt{2}x^2-\frac{144213}{7\sqrt{2}}x^4+O(x^6).
\end{split}
\end{equation}
Here we have used again the series corresponding to the critical path (\ref{seriesFortheCriticalPath}).

Consequently
\begin{equation}
\frac{f_1}{\rho^2}
=\frac{21}{\sqrt{2}}
	-\frac{4107}{28\sqrt{2}}x^2
	+O(x^4),
\end{equation}
and similarly, 
\begin{equation}
f_2(Q)
=-\frac{21}{\sqrt{2}}
	-\frac{423\sqrt{2}}{7}x^2+O(x^4).
\end{equation}
Therefore we obtain that
\begin{equation}
\frac{f_1}{\rho^2}+f_2
\to
-\frac{7491}{28\sqrt{2}}x^2+O(x^4)
\end{equation}
and
\begin{equation}
\label{fOverRho3}
\frac{f_1}{\rho^4}+\frac{f_2}{\rho^2}
=-\frac{7491}{1624\sqrt{2}}+O(x^2)
=-\frac{7491}{1624\sqrt{2}}+O(\rho^2).
\end{equation}
This analysis shows that
in the equal mass case,
when the orbit approaches the rectilinear central configuration along the critical path, the limit of 
$f_1/\rho^4+f_2/\rho^2$  is finite.

%%%%%%%%%%%%%%%%%%%
%%%%%%%%%%%%%%%%%%%
%%%%%%%%%%%%%%%%%%%
\section{Saari's homographic conjecture for non-negative energy, II}
\label{two}
%%%%%%%%%%%%%%%%%%%
%%%%%%%%%%%%%%%%%%%
%%%%%%%%%%%%%%%%%%%
The main goal of this section is to prove the following result.
\begin{theorem}
\label{propositionForNonNegativeEnergy}
In the Newtonian equal-mass case of the three-body problem, Saari's homographic conjecture is true for all non-negative values of the energy.
\end{theorem}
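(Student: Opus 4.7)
The plan is to combine Theorem \ref{mainPropositionForNonNegativeEnergy} with the asymptotic analysis of Section \ref{break}, reducing the problem to matching two independent Laurent expansions of $dI/d\tau$ in $\rho$ and showing that they are inconsistent.

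First I would invoke Theorem \ref{mainPropositionForNonNegativeEnergy} (with $a=1$): the only non-homographic, constant-$\mu$ candidates with $H\ge 0$ left to rule out are those whose shape asymptotically approaches a rectilinear central configuration along a critical path $\mu=\mu_c^{(k)}$. In the equal-mass case all three critical values coincide (they equal $5/\sqrt{2}$), so by the permutation symmetry of the three rectilinear central configurations I may work in the chart of Section \ref{break} and assume the shape tends to $(x,y)=(0,0)$. Inserting the finite limit $f_1/\rho^4+f_2/\rho^2\to -7491/(1624\sqrt{2})$ of (\ref{fOverRho3}) into condition (\ref{theCondition2}) with $m_k=1$, $M=3$, $a=1$ yields the asymptotic relation
\[
I^{1/2}(q)=\frac{d}{\rho}+E_0+O(\rho^2),\qquad E_0=\frac{(B-C^2)}{9}\cdot\frac{7491}{1624\sqrt{2}}>0,\quad d=-2\epsilon C\sqrt{(B-C^2)/3}.
\]
If $C=0$ the right-hand side stays bounded while $I^{1/2}\to\infty$, an immediate contradiction; so I may restrict to $C\neq 0$, $\epsilon C<0$, and $d>0$.

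The key step would then be to compute $dI/d\tau$ along the orbit in two independent ways and match Laurent coefficients in $\rho$. On one hand the Lagrange-Jacobi first integral (\ref{eqIntegralForI}) gives $(dI/d\tau)^2=4I^2(2HI+2\mu I^{1/2}-B)$, and after substituting the displayed expansion for $I^{1/2}$ the leading two terms read $8Hd^6/\rho^6+8d^5(6HE_0+\mu)/\rho^5+O(1/\rho^4)$. On the other hand the shape-speed identity $\sum m_k|dQ_k/d\tau|^2=B-C^2$ of (\ref{kineticEnergyForShape}), restricted to the critical path, combined with the even-in-$x$ expansion $\rho^2=58\,x^2+O(x^4)$ from Section \ref{break}, gives $d\rho/d\tau=-V_0\sqrt{58}+O(\rho^2)$ with $V_0\neq 0$ since $B>C^2$; the chain rule $dI/d\tau=(dI/d\rho)(d\rho/d\tau)$ then produces a second Laurent expansion $(dI/d\tau)^2=232\,d^4 V_0^2/\rho^6+464\,d^3 E_0 V_0^2/\rho^5+O(1/\rho^4)$.

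Matching coefficients would then close the proof in both sub-cases. For $H=0$ the Lagrange-Jacobi expression has no $1/\rho^6$ term while the chain-rule one has a non-vanishing one, so the two cannot agree and $H=0$ is ruled out at leading order. For $H>0$ the $1/\rho^6$ balance is consistent and determines $V_0^2=Hd^2/29$, but the sub-leading $1/\rho^5$ balance, after that substitution, reduces to the scalar identity $E_0=-\mu/(4H)<0$, in direct contradiction with $E_0>0$. I expect the main technical obstacle, and the reason for the preparatory Section \ref{break}, to lie precisely in pinning down signs and parities in these two Laurent expansions: the argument works only because the explicit limit $-7491/(1624\sqrt{2})$ of (\ref{fOverRho3}) is negative (forcing $E_0>0$), while $\mu$ and $H$ are both positive (forcing $-\mu/(4H)<0$); the remaining bookkeeping is a routine (but parity-sensitive) Laurent expansion.
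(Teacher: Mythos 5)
Your proposal is correct and follows essentially the same strategy as the paper's proof: reduce via Theorem \ref{mainPropositionForNonNegativeEnergy} to orbits scattering along the critical path toward a rectilinear central configuration, feed the expansion (\ref{fOverRho3}) into condition (\ref{theCondition2}), and derive a contradiction separately for $C=0$, $H=0$, and $H>0$. The paper organizes the endgame differently but equivalently: instead of matching two Laurent expansions of $(dI/d\tau)^2$ in $\rho$, it computes $\rho$ and $\sqrt{I}$ as explicit functions of $\tau_\infty-\tau$ (equations (\ref{behaviorOfRho}), (\ref{behaviorOfIForEeqZero}), (\ref{behaviorOfIForEPositive})) and substitutes directly into (\ref{theCondition3}); for $H>0$ the constant terms fix the analogue of your relation $V_0^2=Hd^2/29$ and the linear-in-$\rho$ terms yield precisely your scalar contradiction $E_0=-\mu/(4H)<0$, while for $H=0$ one gets $\rho\sqrt{I}\to\infty$ against a bounded right-hand side, which is your order-of-pole mismatch. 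The one step you assert rather than establish is $d\rho/d\tau=-V_0\sqrt{58}+O(\rho^2)$ with $V_0\neq 0$: the shape-speed identity (\ref{kineticEnergyForShape}) only gives the magnitude of the $dQ_k/d\tau$, and converting that into the stated expansion of $d\rho/d\tau$ --- in particular excluding a linear-in-$\rho$ correction, which would contaminate your $1/\rho^5$ coefficient, and excluding $V_0=0$, which your $H=0$ case needs --- requires either the metric/parity computation you allude to or, as the paper does, the explicit formula for $\rho^2\,d\rho/d\tau$ obtained from $\rho^2=-\sum_{s<m}E_sE_m$, equation (\ref{drij}), and the candidate momenta, which gives $d\rho/d\tau=-\kappa\sqrt{406}\,\bigl(1+O(\rho^2)\bigr)$. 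Once that computation is supplied, your coefficient matching closes the argument exactly as the paper's does.
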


Recall that by equations (\ref{eqPhi}), (\ref{eqMotForI}), and (\ref{eqIntegralForI}), $I \to \infty$ when $t \to \infty$.
We will distinguish several cases. Let us start with a simple one. 

\begin{proposition}
\label{lemmaNonNegativeHVanishingC}
In the Newtonian equal-mass case of the three-body problem, Saari's homographic conjecture is true for non-negative values of the energy and zero angular momentum. 
\end{proposition}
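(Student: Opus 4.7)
The plan is a contradiction argument that combines Theorem \ref{mainPropositionForNonNegativeEnergy} with the asymptotic expansion computed in Section \ref{break}. Suppose, contrary to the claim, that there exists a non-homographic solution of the planar Newtonian three-body problem with $m_1=m_2=m_3=1$, constant configurational measure $\mu$, non-negative energy $H\ge 0$, and vanishing angular momentum $C=0$. By Lemma \ref{keylemma} the Fujiwara momenta must have the form (\ref{theCandidate}) with $\kappa\ne 0$ and $G_k\ne 0$, which forces $B-C^2=B>0$ and $\rho>0$ along the orbit, so condition (\ref{theCondition2}) is in force. Setting $C=0$, $a=1$, $m_j=1$, and $M=3$ in (\ref{theCondition2}), the term proportional to $C/\rho$ drops out and the condition collapses to
\begin{equation}
\label{planCond}
-I^{1/2}(q)=\frac{B}{9}\left(\frac{f_1}{\rho^4}+\frac{f_2}{\rho^2}\right).
\end{equation}

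Next, case (b) of the classification in Section \ref{secEvolutionOfI} guarantees that $I\to\infty$ in at least one time direction when $H\ge 0$, so the left-hand side of (\ref{planCond}) tends to $-\infty$ along the corresponding Fujiwara trajectory. By (\ref{inequalityForQ}) the Fujiwara mutual distances $r_{jk}(Q)$ remain confined to a fixed compact set, so $f_1$ and $f_2$ stay bounded so long as the shape is bounded away from the rectilinear central configurations. Consequently the divergence $f_1/\rho^4+f_2/\rho^2\to-\infty$ can occur only if $\rho\to 0$, i.e., the shape asymptotically approaches a central configuration. As in the proof of Theorem \ref{mainPropositionForNonNegativeEnergy}, the equilateral central configurations are isolated minima of $\mu$ and hence cannot be asymptotic limits of an orbit of constant $\mu$, so the orbit must instead converge to one of the three rectilinear central configurations along a critical path $\mu\equiv\mu_c^{(k)}$.

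This is precisely the regime analysed in Section \ref{break}. In the equal-mass case the three rectilinear central configurations are permutation-equivalent, so the single expansion around $(x,y)=(0,0)$ performed there covers every possible asymptotic scenario. The final identity (\ref{fOverRho3}) gives
\[
\frac{f_1}{\rho^4}+\frac{f_2}{\rho^2}\;\longrightarrow\;-\frac{7491}{1624\sqrt{2}},
\]
a finite constant. This flatly contradicts the requirement $f_1/\rho^4+f_2/\rho^2\to-\infty$ derived above, and no such non-homographic solution can exist.

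The decisive feature that makes the plan close is the vanishing of the $2\epsilon C/\rho$ term in (\ref{theCondition2}): with non-vanishing angular momentum that term diverges like $1/\rho$ and could in principle balance the $-I^{1/2}\to-\infty$ on the left, so the present argument would not work in that generality. The equal-mass hypothesis is used only to import the explicit asymptotic constant of Section \ref{break}; the remainder is essentially a repackaging of Theorem \ref{mainPropositionForNonNegativeEnergy} together with the boundedness of $f_1$ and $f_2$ away from the rectilinear central configurations. The main point to verify carefully is that ``along at least one time direction'' suffices to trigger the contradiction, and that the Section \ref{break} expansion really does control the approach to whichever of the three permutation-equivalent rectilinear central configurations the orbit selects.
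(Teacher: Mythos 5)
Your argument is correct and is essentially the paper's own proof, merely with the implicit steps (the reduction of condition (\ref{theCondition2}) at $C=0$, the forced convergence to a rectilinear central configuration along a critical path via the mechanism of Theorem \ref{mainPropositionForNonNegativeEnergy}, and the finite limit from (\ref{fOverRho3})) spelled out explicitly. The paper's version of the proof is simply a terser statement of the same contradiction between the divergent left-hand side $-\sqrt{I}$ and the bounded right-hand side.
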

\begin{proof}
Since the angular momentum vanishes, i.e. $C=0$, condition (\ref{theCondition2}) becomes
\begin{equation}
\label{theCondition2ForCeq0}
-\sqrt{I}
=\frac{B}{9}
%%%%%%%%%
	\left(
		\frac{f_1}{\rho^4}+\frac{f_2}{\rho^2}
	\right).
\end{equation}
By (\ref{fOverRho3}), the right hand side of (\ref{theCondition2ForCeq0}) tends to a finite value when the motion 
approaches the origin, while the left hand side goes to infinity. This completes the proof. 
\end{proof}

In the non-zero angular momentum case, $C \ne 0$,
the key term in (\ref{theCondition2}) is
\[
2\epsilon C
	\sqrt{\frac{B-C^2}{3}}
	\;\frac{1}{\rho}.
\]
Since the left hand side of (\ref{theCondition2}) is negative, the sign factor $\epsilon$ must be chosen such that
\[
\epsilon C = -|C|.
\] 
Then condition (\ref{theCondition2}) becomes
\begin{equation}
\label{theCondition3}
-\rho \sqrt{I}
=\left( \frac{B-C^2}{9} \right)
	\left(
		\frac{f_1}{\rho^4}+\frac{f_2}{\rho^2}
	\right)
	\rho
	-
	2 |C|
	\sqrt{\frac{B-C^2}{3}}.
\end{equation}
To prove Theorem \ref{propositionForNonNegativeEnergy} for $C \ne 0$, we have to analyse the behaviour of $\rho(Q)\sqrt{I(q)}$ with respect to the fictitious time variable $\tau$. We will start by analysing $\rho(Q)$ and then continue with $I(q)$.

In Appendix \ref{secRhoByE}, we show that $\rho(Q)$ is given by
\[
\rho^2=-(E_1E_2+E_2E_3+E_3E_1),
\]
where $E_\ell$ are
defined by
\begin{equation}
\label{defOfE}
E_\ell(Q) = E_{jk}(Q)=m_j m_k \left( \frac{1}{r_{jk}^{a+2}(Q)}-\frac{a\mu}{M} \right).
\end{equation}
Therefore
\[
\frac{d}{d\tau}\rho^2
=-\frac{dE_1}{d\tau}(E_2+E_3)
	-\frac{dE_2}{d\tau}(E_3+E_1)
	-\frac{dE_3}{d\tau}(E_1+E_2).
\]
Then, using equations (\ref{defOfE}) and (\ref{drij}), we obtain
\[
\begin{split}
\rho^2 \frac{d\rho}{d\tau}
=-&\frac{a+2}{2}m_1 m_2 m_3 \epsilon \kappa \Delta\
	\sum \frac{1}{r_{jk}^{a+4}}
		\left( \frac{1}{r_{k\ell}^{a+2}}-\frac{1}{r_{\ell j}^{a+2}}\right)\\
	&\times	\left[
			m_k m_\ell \left(\frac{1}{r_{k\ell}^{a+2}}-\frac{a\mu}{M}\right)
			+
			m_\ell m_j \left(\frac{1}{r_{\ell j}^{a+2}}-\frac{a\mu}{M}\right)
		\right].
\end{split}
\]
Note that the right hand side is a function of $Q$.

To determine the behaviour of $\rho$ near the origin, we take $q_1=(-1,0)$, $q_2=(1,0)$, and $q_3=(x,y)$. Then we have
\[
\rho^2 \frac{d\rho}{d\tau}
=\frac{\sqrt{2}\epsilon \kappa}{3}
	x y
	(1218+7769x^2-6022y^2
	+\mbox{ higher order terms}).
\]

Consider now the orbit $x \to +0$ and $y \to +0$. Then, using the expression (\ref{seriesFortheCriticalPath}) for the critical path, we are led to
\[
\rho^2 \frac{d\rho}{d\tau}
=\epsilon \kappa 
	\left(
		58\sqrt{406}x^2+\frac{47576}{7}\sqrt{\frac{58}{7}}x^4+O(x^6)
	\right).
\]
The sign factor $\epsilon$ must be negative because we are considering the limit $\rho \to 0$. Then, using the expression (\ref{rho2SeriesFortheCriticalPath}) for $\rho^2$ near the origin along the critical path, we obtain
\[
\frac{d\rho}{d\tau}
=-\kappa\sqrt{406}
	\left(
		1
		-\frac{38711}{812\times406}\rho^2
		+O(\rho^4)
	\right).
\]
Thus, the asymptotic behaviour of $\rho$ is given by
\begin{equation}
\begin{split}
\label{behaviorOfRho}
\rho
&=\sqrt{406}\kappa (\tau_\infty - \tau)
	\left(
		1+O(\rho^2)
	\right)\\
&=\sqrt{406}\kappa (\tau_\infty - \tau)
	\left(
		1+O\left( (\tau_\infty - \tau)^2 \right)
	\right).
\end{split}
\end{equation}

We can now analyze the asymptotic behaviour of $I(q)$. From equations (\ref{eqPhi}), (\ref{eqMotForI}), and (\ref{eqIntegralForI}),
solutions with $H\ge 0$ have the property that $I \to \infty$ when $t \to \infty$. Also,
\[
\frac{dI}{dt}
=\sqrt{8HI+8\mu I^{1/2}-4B}.
\]
Since $d\tau = dt/I$, we obtain
\[
d\tau
=\frac{dI}{I \sqrt{8HI+8\mu I^{1/2}-4B}}.
\]
Integrating this equality, we are led to the function $\tau(I)$.
Then $I \to \infty$ corresponds to $\tau \to \tau(\infty)$,
which turns out to be finite.
We write $\tau(I)=\tau$ and $\tau(\infty)=\tau_\infty$. Then
%%%%%%%%%
\[
\tau_\infty - \tau
=\int_{\tau}^{\tau_\infty}d\tau
=\int_{I}^{\infty}\frac{dI}{I \sqrt{8HI+8\mu I^{1/2}-4B}}.
\]
We further split our discussion into two cases: $H=0$ and $H>0$.

For $H=0$,
\[
\tau_\infty - \tau
= \frac{1}{\sqrt{8\mu}}
	\int_{I}^{\infty} \frac{dI}{I^{1+1/4}}
	\left(
		1+O\left(\frac{1}{\sqrt{I}}\right)
	\right)
=\frac{1}{\sqrt{8\mu}} \frac{4}{I^{1/4}}
	\left(
		1+O\left(\frac{1}{\sqrt{I}}\right)
	\right).
\]
Thus, we obtain
\begin{equation}
\label{behaviorOfIForEeqZero}
\sqrt{I}
=
\frac{2}{\mu}\; \frac{1}{(\tau_\infty - \tau)^2}
	\left(
		1+O(\tau_\infty-\tau)^2
	\right).
\end{equation}
Combining equations (\ref{behaviorOfRho}) and
(\ref{behaviorOfIForEeqZero}), we have
\[
\rho(Q) \sqrt{I(q)}
=
\frac{2\sqrt{406}\kappa}{\mu}\; \frac{1}{(\tau_\infty-\tau)}
	\left(
		1+O(\tau_\infty-\tau)^2
	\right)
\to \infty.
\]
Therefore condition (\ref{theCondition3}) is not satisfied near the origin. We have thus proved the following result.
\begin{proposition}
\label{lemmaVanishingHNonVanishingC}
In the Newtonian equal-mass case of the three-body problem, Saari's homographic conjecture is true for zero energy and all non-zero angular momenta. 
\end{proposition}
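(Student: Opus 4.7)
The plan is to argue by contradiction, assuming the existence of a non-homographic solution with $\mu=$~constant, $H=0$, and $C\neq 0$. By Theorem~\ref{mainPropositionForNonNegativeEnergy}, such an orbit cannot be a collision orbit and must asymptotically approach a rectilinear central configuration along one of the critical paths $\mu=\mu_c^{(k)}$. In the Newtonian equal-mass case the three critical paths coincide (with $\mu=5/\sqrt{2}$) and are related by symmetry, so it suffices to analyse the approach to the configuration at the origin in the coordinates $q_1=(-1,0)$, $q_2=(1,0)$, $q_3=(x,y)$, as in Section~\ref{break}.

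The first step is to invoke Lemma~\ref{keylemma} to reduce to condition (\ref{theCondition2}) on the candidate Fujiwara momenta. For $C\neq 0$, since the left hand side of (\ref{theCondition2}) is negative and $\rho\to 0$ along the candidate, the sign factor $\epsilon$ must be chosen so that $\epsilon C=-|C|$; this yields the rewriting (\ref{theCondition3}). The right hand side of (\ref{theCondition3}) stays bounded as we approach the rectilinear central configuration: the term $(f_1/\rho^4+f_2/\rho^2)\,\rho$ vanishes by (\ref{fOverRho3}), while $2|C|\sqrt{(B-C^2)/3}$ is a constant. Consequently, the contradiction must come from proving that the left hand side $\rho\sqrt{I}$ diverges.

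Next, I would establish the asymptotic behaviour of $\rho$ in the fictitious time $\tau$ defined by $d\tau=dt/I$. Using the representation $\rho^2=-(E_1E_2+E_2E_3+E_3E_1)$ from Appendix~\ref{secRhoByE} and substituting $m_k\,dQ_k/d\tau=i\epsilon\kappa G_k/\rho$, one obtains $\rho^2\,d\rho/d\tau$ as an explicit polynomial in the shape variables. Restricting to the critical path via (\ref{seriesFortheCriticalPath}), and choosing $\epsilon<0$ (forced by $\rho\to 0$), simplification yields $d\rho/d\tau\to -\sqrt{406}\,\kappa$, hence the linear decay (\ref{behaviorOfRho}): $\rho\sim\sqrt{406}\,\kappa(\tau_\infty-\tau)$ as $\tau\to\tau_\infty$.

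Finally, I would control $I(q)$ for $H=0$. From (\ref{eqPhi}) and (\ref{eqIntegralForI}), $dI/dt=\sqrt{8\mu\sqrt{I}-4B}$, which in the $\tau$ variable becomes $d\tau=dI/\bigl(I\sqrt{8\mu\sqrt{I}-4B}\bigr)$. The improper integral from $I$ to $\infty$ converges, with leading behaviour $\tau_\infty-\tau\sim(4/\sqrt{8\mu})\,I^{-1/4}$, giving (\ref{behaviorOfIForEeqZero}). Combined with the linear vanishing of $\rho$, this produces $\rho\sqrt{I}\sim(2\sqrt{406}\,\kappa/\mu)(\tau_\infty-\tau)^{-1}\to\infty$, contradicting (\ref{theCondition3}). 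The main obstacle, and the reason the argument is pinned to the Newtonian equal-mass case, is quantitatively matching the two competing rates: one needs the precise leading behaviour of $f_1/\rho^4+f_2/\rho^2$ along the critical path together with the precise decay rate of $\rho$, and equal masses make the shape-variable expansions in Section~\ref{break} tractable enough for both to be carried out in closed form.
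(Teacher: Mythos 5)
Your proposal is correct and follows essentially the same route as the paper: reduction to condition (\ref{theCondition3}) with $\epsilon C=-|C|$, the linear decay $\rho\sim\sqrt{406}\,\kappa(\tau_\infty-\tau)$ from the Appendix~\ref{secRhoByE} representation along the critical path, the estimate $\sqrt{I}\sim (2/\mu)(\tau_\infty-\tau)^{-2}$ for $H=0$, and the resulting divergence of $\rho\sqrt{I}$ against a bounded right hand side. Your explicit invocation of Theorem~\ref{mainPropositionForNonNegativeEnergy} to localize the analysis near the rectilinear central configuration merely makes explicit what the paper leaves implicit from Sections~\ref{one} and~\ref{break}.
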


For $H>0$,
\[
\begin{split}
\tau_\infty - \tau
&= \frac{1}{\sqrt{8H}}
	\int_{I}^{\infty} \frac{dI}{I^{1+1/2}}
	\left(
		1-\frac{\mu}{2H}\frac{1}{\sqrt{I}}+O\left(\frac{1}{I}\right)
	\right)\\
&=\frac{1}{\sqrt{8H}} 
	\left(
		\frac{2}{I^{1/2}}
		-\frac{\mu}{2H}\frac{1}{I}
		+O\left(\frac{1}{I^{3/2}}\right)
	\right).
\end{split}
\]
We thus obtain
\[
\begin{split}
\frac{1}{\sqrt{I}}
&=\sqrt{2H}(\tau_\infty - \tau)+\frac{\mu}{4H}\frac{1}{I}+O\left(\frac{1}{I^{3/2}}\right)\\
&=\sqrt{2H}(\tau_\infty - \tau)+\frac{\mu}{2}(\tau_\infty-\tau)^2+O\left((\tau_\infty-\tau)^3\right)\\
&=\sqrt{2H}(\tau_\infty - \tau)
	\left(
		1+\frac{\mu}{2\sqrt{2H}}(\tau_\infty-\tau)+O\left( (\tau_\infty-\tau)^2\right)
	\right).
\end{split}
\]
Therefore
\begin{equation}
\label{behaviorOfIForEPositive}
\sqrt{I}
=\frac{1}{\sqrt{2H}}\; \frac{1}{(\tau_\infty - \tau)}
	\left(
		1-\frac{\mu}{2\sqrt{2H}}(\tau_\infty-\tau)+O\left( (\tau_\infty-\tau)^2\right)
	\right).
\end{equation}
Then, for the left hand side of condition (\ref{theCondition3}), we have
\[
\begin{split}
-\rho \sqrt{I}
&=-\frac{\sqrt{406}}{\sqrt{2H}}
	\kappa
	\left(
		1
		- \frac{\mu}{2\sqrt{2H}} (\tau_\infty-\tau)
		+O(\tau_\infty-\tau)^2
	\right)\\
&=-\sqrt{\frac{406}{2H}}
	\kappa
	+\frac{\mu}{4H}\rho
	+O(\rho^2).
\end{split}
\]
Note that the coefficient of the linear term in $\rho$
is $\mu/4H$, which is positive.
Using equation (\ref{fOverRho3}), the right hand side of condition (\ref{theCondition3}) becomes
\[
\frac{B-C^2}{9}
	\left(
		-\frac{7491}{1624\sqrt{2}}
		+O(\rho^2)
	\right)
	\rho
-2|C| \sqrt{\frac{B-C^2}{3}},
\]
which means that the corresponding coefficient of the $\rho$ term is
negative. This implies that condition (\ref{theCondition3}) cannot be satisfied. Thus, we have proved the following result.
\begin{proposition}
\label{lemmaPositiveHNonVanishingC}
In the Newtonian equal-mass case of the three-body problem, Saari's homographic conjecture is true for positive energy and all non-zero angular momenta. 
\end{proposition}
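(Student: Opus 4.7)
The approach follows the strategy already used for the $H=0$ case in Proposition \ref{lemmaVanishingHNonVanishingC}: I would assume for contradiction that a non-homographic solution with constant configurational measure exists, invoke Theorem \ref{mainPropositionForNonNegativeEnergy} to conclude that it must asymptotically tend toward a rectilinear central configuration along one of the critical paths, and then show that the fundamental condition (\ref{theCondition3}) derived in Section \ref{condition} is violated asymptotically. As before, I would work in the Newtonian equal-mass setting and reduce the analysis to a neighborhood of the origin with $q_1=(-1,0)$, $q_2=(1,0)$, $q_3=(x,y)$ along the critical path characterized by (\ref{seriesFortheCriticalPath}).

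The first step is to determine the asymptotic behavior of the moment of inertia $I(q)$ in the fictitious time $\tau$. From the first integral (\ref{eqIntegralForI}) I have $dI/dt=\sqrt{8HI+8\mu I^{1/2}-4B}$, and since $d\tau = dt/I$, integrating yields an expression for $\tau_\infty-\tau$ as an integral in $I$ that converges as $I\to\infty$ (for $H>0$ the integrand decays like $I^{-3/2}$, not $I^{-5/4}$ as in the $H=0$ case). Expanding to first subleading order should yield
\[
\sqrt{I}=\frac{1}{\sqrt{2H}\,(\tau_\infty-\tau)}\left(1-\frac{\mu}{2\sqrt{2H}}(\tau_\infty-\tau)+O((\tau_\infty-\tau)^2)\right),
\]
which diverges only linearly in $1/(\tau_\infty-\tau)$, in contrast to the quadratic divergence in the $H=0$ case.

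Combining this with the near-critical-path expansion $\rho=\sqrt{406}\,\kappa(\tau_\infty-\tau)(1+O((\tau_\infty-\tau)^2))$ from (\ref{behaviorOfRho}), the left-hand side of (\ref{theCondition3}) tends to the finite nonzero constant $-\sqrt{406/(2H)}\,\kappa$, with a correction linear in $\rho$ whose coefficient I would carefully track and identify as $\mu/(4H)$, strictly positive. On the right-hand side, the finite leading term is $-2|C|\sqrt{(B-C^2)/3}$ and the subleading linear term in $\rho$, computed from (\ref{fOverRho3}), has coefficient $(B-C^2)/9 \cdot (-7491/(1624\sqrt{2}))$, which is strictly negative.

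The core of the argument is then a matching of expansions: the leading finite constants on both sides must agree (which pins down a relation among $B$, $C$, $\kappa$, $H$, and $\mu$), and once this matching is forced, the linear coefficients in $\rho$ must also agree; but they have opposite signs, producing the desired contradiction. The main technical obstacle, which parallels what is needed in Proposition \ref{lemmaVanishingHNonVanishingC}, lies in producing the correct subleading term of $\sqrt{I}$ as a function of $\tau$: the $H>0$ integration requires more care than $H=0$ because the $\mu/(2\sqrt{2H})$ correction must be tracked accurately for the sign of the linear coefficient on the left-hand side to be unambiguously positive. Once that expansion is secured, the sign mismatch with the strictly negative coefficient furnished by (\ref{fOverRho3}) immediately closes the proof.
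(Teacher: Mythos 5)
Your proposal is correct and follows essentially the same route as the paper: the identical expansion $\sqrt{I}=\frac{1}{\sqrt{2H}(\tau_\infty-\tau)}\bigl(1-\frac{\mu}{2\sqrt{2H}}(\tau_\infty-\tau)+O((\tau_\infty-\tau)^2)\bigr)$, combined with the critical-path expansion of $\rho$, leading to the sign clash between the positive coefficient $\mu/(4H)$ of the linear-in-$\rho$ term on the left of (\ref{theCondition3}) and the negative coefficient coming from (\ref{fOverRho3}) on the right. The paper states the conclusion slightly more tersely (it does not spell out the matching of the constant terms), but the argument is the same.
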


The proof of Theorem \ref{propositionForNonNegativeEnergy} follows now from Propositions \ref{lemmaNonNegativeHVanishingC},
\ref{lemmaVanishingHNonVanishingC}, and \ref{lemmaPositiveHNonVanishingC}.

\medskip

We will end this section with the following result, inspired by the behaviour observed in equation (\ref{fOverRho3}).

\begin{proposition}
In the Newtonian equal-mass case of the three-body problem of constant configurational measure and non-zero angular momentum, if the particle system forms a central configuration at some initial time instant, then it maintains the same central configuration for all times.
\end{proposition}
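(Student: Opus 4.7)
The plan is to show that the orbit must be homographic, whereupon Lemma \ref{lammaHomographic3} immediately yields preservation of the central configuration shape. So assume, for contradiction, that the orbit is non-homographic. Lemma \ref{lammaHomographic2} then forces $B>C^2$, Lemma \ref{lammaHomographic4} forces $\rho\not\equiv 0$, and Lemma \ref{keylemma} places the orbit in the candidate class $P_k=i\epsilon(\kappa/\rho)G_k$ with condition (\ref{theCondition2}) holding wherever $\rho>0$. Real-analyticity of the solution together with $\rho(Q(0))=0$ forces $t=0$ to be an isolated zero of $\rho$, so $\rho(Q(t))\to 0^+$ along the orbit as $t\to 0$ from either side, and the limit of (\ref{theCondition2}) must be finite on the left.

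I would next split according to the type of the initial central configuration. If $Q(0)$ is one of the two equilateral CCs, I would invoke the fact that the equilateral CCs are isolated local minima of $\mu$ on the shape space (a fact already exploited in the proof of Theorem \ref{mainPropositionForNonNegativeEnergy}). Constancy of $\mu$ then traps the orbit inside the level set $\{\mu=\mu_{\text{eq}}\}$, which near the CC is that single point. Hence $Q(t)$ cannot leave the equilateral CC, $\rho\equiv 0$ on a neighbourhood of $0$, contradicting non-homographicity.

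If $Q(0)$ is one of the three rectilinear CCs, I would place it at $(x,y)=(0,0)$ in the parameterization of Section \ref{break}. Constancy of $\mu$ forces $Q(t)$ to move along the critical path described by (\ref{seriesFortheCriticalPath}), so (\ref{fOverRho3}) applies and gives $f_1/\rho^4+f_2/\rho^2\to -7491/(1624\sqrt{2})$, a finite limit, as $t\to 0$. Substituting into the equal-mass Newtonian version of condition (\ref{theCondition2}),
\[
-\sqrt{I(q)}=\frac{B-C^2}{9}\left(\frac{f_1}{\rho^4}+\frac{f_2}{\rho^2}\right)+2\epsilon C\sqrt{\frac{B-C^2}{3}}\,\frac{1}{\rho},
\]
the left side tends to the finite value $-\sqrt{I(q(0))}$ and the first right-hand term is bounded, while the final term diverges because $C\neq 0$ and $B>C^2$. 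This is the desired contradiction.

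The main obstacle is the rectilinear case: the argument succeeds only because the asymptotic formula (\ref{fOverRho3}) of Section \ref{break} shows that the combination $f_1/\rho^4+f_2/\rho^2$ has a finite limit along the critical path, so that no cancellation of the divergent $1/\rho$ term is possible. Justifying that $Q(t)$ truly approaches the rectilinear CC tangentially to the critical path is immediate, since $\mu$ is conserved and the critical path is precisely the local level set $\{\mu=\mu_c\}$. With both types of central configurations eliminated from the non-homographic scenario, the orbit must be homographic and Lemma \ref{lammaHomographic3} completes the proof.
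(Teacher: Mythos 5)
Your proof is correct and follows essentially the same route as the paper: the equilateral case is dispatched by the isolated-minimum property of $\mu$, and the rectilinear case by noting that along the critical path the combination $f_1/\rho^4+f_2/\rho^2$ stays bounded (equation (\ref{fOverRho3})) while the term $2\epsilon C\sqrt{(B-C^2)/3}\,/\rho$ in condition (\ref{theCondition2}) diverges against a finite left-hand side $-\sqrt{I(q)}$. The only cosmetic difference is that the paper phrases the approach to the rectilinear configuration via time reversal of an escaping orbit, whereas you argue directly that $t=0$ is an isolated zero of $\rho$ by analyticity; both amount to the same contradiction.
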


\begin{proof}
This property is obvious for the equilateral central configurations because the equilateral shape is an isolated minimum of the configurational measure $\mu(Q)$.
For the rectilinear central configurations, 
if a solution escapes from a rectilinear central configuration
after having reached it at some time instant,
then the time-reversed orbit approaches this rectilinear configuration
along the critical path with finite $I(q)$.
Then, in condition (\ref{theCondition2}), the left hand side is finite, while the right hand side diverges as $1/\rho \to \infty$. This completes the proof.
\end{proof}

In principle, this result could be extended to the case $C=0$ by estimating the time dependence near the critical points on both sides of the equation
\begin{equation}
-\sqrt{I}
=\frac{B}{9}
%%%%%%%%%
	\left(
	\frac{f_1}{\rho^4}+\frac{f_2}{\rho^2}
	\right).
\end{equation}
%%%%%%%%

%%%%%%%%%%%%%%%%%%%%%%%%%%%%
%%%%%%%%%%%%%%%%%%%%%%%%%%%%
%%%%%%%%%%%%%%%%%%%%%%%%%%%%
\section{General results}
\label{final}
%%%%%%%%%%%%%%%%%%%%%%%%%%%%
%%%%%%%%%%%%%%%%%%%%%%%%%%%%
%%%%%%%%%%%%%%%%%%%%%%%%%%%%
In this final section, we will prove a few general results that clarify Saari's homographic conjecture for a large class of initial conditions.
As shown in Section \ref{kineticSection},
to prove Saari's homographic conjecture for the planar three-body problem
suffices to show that there are no solutions in case (iv),
namely when $\mu=$ constant, $B-C^2>0$, and $\rho\ne 0$.
In Sections \ref{one} and \ref{two}, we investigated 
the orbit in case (iv) for non-negative energy. 
In Section \ref{one}, we pointed out that we can prove Saari's homographic conjecture for non-negative energy
by showing that condition (\ref{theCondition2}) is violated
near the rectilinear central configurations along the critical path.
Then, in Section \ref{two}, we applied this idea for the Newtonian equal-mass case. For negative energy, we will further prove some additional properties.

Notice that, by equation (\ref{eqIntegralForI}), the moment of inertia oscillates for $H<0$ between $I_{\text{min}}$ and $I_{\text{max}}$, which are the two solutions of the equation $\Phi(I)=-2B$.
Then the fictitious time defined by equation (\ref{defOfTau}),
\[
\tau(t)
=\int_0^{\tau(t)} d\tau
=\int_0^t \frac{dt}{I(q)},
\]
is such that $\tau \to \infty$ when $t \to \infty$.

We can now prove the following lemmas.
\begin{lemma}
\label{distancezero}
For any solution of the planar three-body problem, there is a fictitious time instant, $\tau_0$, such that 
$dr_{jk}(Q(\tau_0))/d\tau =0$ for some $j,k$,
where
$r_{jk}(Q)=|Q_j-Q_k|=|q_j-q_k|/\sqrt{I(q)}$.
\end{lemma}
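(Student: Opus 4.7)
My plan is to argue by contradiction, exploiting the bounds $0<I_{\min}\le I(q)\le I_{\max}<\infty$ (valid because $H<0$), the two-sided estimate (\ref{inequalityForQ}) for $r_{jk}(Q)$, the identity $\sum_{j<k} m_jm_k\,r_{jk}^2(Q)=M$ coming from $I(Q)=1$, the vanishing of the Fujiwara angular momentum $\sum_k m_k Q_k\wedge dQ_k/d\tau=0$, and the positive constant Fujiwara kinetic energy $\sum_k m_k|dQ_k/d\tau|^2=B-C^2>0$ from (\ref{kineticEnergyForShape}). Suppose that $dr_{jk}(Q(\tau))/d\tau\ne 0$ for every $\tau\ge 0$ and every pair $(j,k)$. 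By continuity each of these derivatives keeps a constant sign on $[0,\infty)$, so each $r_{jk}(Q(\tau))$ is strictly monotone in $\tau$. Inequality (\ref{inequalityForQ}) confines every $r_{jk}(Q)$ to a compact subinterval of $(0,\infty)$, and monotone boundedness gives $r_{jk}(Q(\tau))\to L_{jk}>0$ as $\tau\to\infty$; the shape of the triangle $Q_1Q_2Q_3$ therefore stabilises on the shape sphere.

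Next I would upgrade convergence of the shape to decay of the relative velocities. Since $I$ stays in $[I_{\min},I_{\max}]$ and each $r_{jk}(Q)$ stays away from $0$, the right-hand side of the Fujiwara equation of motion (\ref{equationOfMotion}) is uniformly bounded, so $d^2Q_k/d\tau^2$ and hence $d^2 r_{jk}(Q)/d\tau^2$ are bounded in $\tau$. Monotonicity of $r_{jk}(Q)$ gives $\int_0^\infty |dr_{jk}/d\tau|\,d\tau=|L_{jk}-r_{jk}(Q(0))|<\infty$; combining this with uniform continuity of $dr_{jk}/d\tau$ (furnished by the bounded second derivative) and applying Barbalat's lemma yields $dr_{jk}(Q)/d\tau\to 0$. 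Writing $Q_j-Q_k=r_{jk}e^{i\theta_{jk}}$, the three side lengths determine the internal angles of a planar triangle, so every difference $\theta_{jk}-\theta_{\ell m}$ has a finite limit; a second application of Barbalat's lemma (the $\tau$-derivatives of these differences are uniformly continuous for the same reasons) yields $d\theta_{jk}/d\tau-d\theta_{\ell m}/d\tau\to 0$. Hence the three tangential rates $\omega_{jk}:=d\theta_{jk}/d\tau$ coincide with a common function $\omega(\tau)$ up to an $o(1)$ error.

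The final step is to exploit the mass-weighted conservation law. Expanding $\sum_k m_kQ_k\wedge dQ_k/d\tau=0$ against $\sum_k m_kQ_k=0$ yields
\[
\sum_{j<k} m_jm_k\,(Q_j-Q_k)\wedge\left(\frac{dQ_j}{d\tau}-\frac{dQ_k}{d\tau}\right)=\sum_{j<k} m_jm_k\,r_{jk}^2\,\omega_{jk}=0.
\]
Since $\sum_{j<k} m_jm_k\,r_{jk}^2=M$ is constant, substituting $\omega_{jk}=\omega(\tau)+o(1)$ gives $\omega(\tau)M\to 0$, so $\omega(\tau)\to 0$. Combining $dr_{jk}/d\tau\to 0$ with $\omega_{jk}\to 0$ gives $dQ_j/d\tau-dQ_k/d\tau\to 0$ for every pair; together with $\sum_k m_k\,dQ_k/d\tau=0$ (from $\sum_k m_k Q_k=0$) this forces every $dQ_k/d\tau\to 0$, contradicting $\sum_k m_k|dQ_k/d\tau|^2=B-C^2>0$.

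The hard part will be the careful application of Barbalat's lemma: one must verify that the bounded second derivatives really do supply uniform continuity of both $dr_{jk}/d\tau$ and $d\theta_{jk}/d\tau$ along the entire trajectory. This is precisely where the negative-energy bound $I\le I_{\max}$ and the lower bound on $r_{jk}(Q)$ from (\ref{inequalityForQ}) are essential, since they prevent the right-hand side of (\ref{equationOfMotion}) from blowing up and keep the angular coordinate $\theta_{jk}$ away from the singular locus $r_{jk}=0$.
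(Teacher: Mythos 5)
Your argument is correct and follows essentially the same route as the paper's proof: assume no derivative ever vanishes, deduce that each $r_{jk}(Q)$ is monotone and bounded hence convergent, use the vanishing Fujiwara angular momentum to force $P_k\to 0$, and contradict the constancy of $\sum |P_k|^2/m_k=B-C^2$. The only difference is one of rigour: where the paper simply asserts ``the variables $Q_k$ approach limiting positions, so $P_k\to 0$,'' you justify that step via bounded second derivatives and Barbalat's lemma, which is a welcome (and correct) filling-in of detail rather than a new approach.
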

\begin{proof}
If the motion is homographic, then $r_{jk}(Q)=$ constant, therefore $dr_{jk}(Q)/d\tau=0$ for all $\tau$. Consider therefore a non-homographic solution. Since $I(Q)=1$, the only case in which $dr_{jk}/d\tau \ne 0$ for all $j, k$, and $\tau$ is the one in which $r_{jk}$ tends to a constant when $\tau \to \infty$. This happens when the coordinates $Q_k$ have a limiting shape: a triangle or a line.
%%%%%%%%%
Since the angular momentum is zero in Fujiwara coordinates,
the variables $Q_k$ approach limiting positions, so $P_k \to 0$.
Since $\sum |P_k|^2/m_k = B-C^2$ (constant), it follows that 
$\sum |P_k|^2/m_k =0$. 
Therefore $P_k=0, k=1,2,3$. 
Consequently the solution is homographic, a contradiction that proves the result.
\end{proof}

\begin{lemma}
\label{rectil-or-isos}
Every solution of the planar three-body problem with constant configurational measure passes through at least one rectilinear or one isosceles configuration.
\end{lemma}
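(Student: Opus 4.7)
My plan is to split into the homographic and non-homographic cases. In the homographic case Lemma \ref{lammaHomographic3} says the bodies maintain a central configuration, and since the only planar three-body central configurations are the two equilateral and the three rectilinear ones, the conclusion is immediate.

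Suppose then the motion is non-homographic. By Lemma \ref{distancezero} there is a fictitious time $\tau_0$ and a pair of indices (which, without loss of generality, we take to be $(1,2)$) with $dr_{12}/d\tau=0$ at $\tau_0$. Differentiating the two scalar conservation laws
\[
I(Q)=\frac{1}{M}\sum_{j<k} m_j m_k\, r_{jk}^2=1 \quad\text{and}\quad a\mu=\sum_{j<k}\frac{m_j m_k}{r_{jk}^a}
\]
with respect to $\tau$ gives
\[
\sum_{j<k} m_j m_k\, r_{jk}\,\frac{dr_{jk}}{d\tau}=0, \qquad \sum_{j<k}\frac{m_j m_k}{r_{jk}^{a+1}}\,\frac{dr_{jk}}{d\tau}=0.
\]
Feeding in $dr_{12}/d\tau=0$ at $\tau_0$ reduces this pair to a homogeneous $2\times 2$ system in the unknowns $(dr_{13}/d\tau,\,dr_{23}/d\tau)$ whose determinant is a positive multiple of $r_{13}^{a+2}-r_{23}^{a+2}$. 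Either $r_{13}(\tau_0)=r_{23}(\tau_0)$, giving an isosceles (or, degenerately, rectilinear) configuration at $\tau_0$ and we are done, or the two remaining distance derivatives also vanish at $\tau_0$.

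In the second alternative all three $dr_{jk}/d\tau$ vanish at $\tau_0$. If the bodies are collinear at $\tau_0$ the configuration is already rectilinear and we are done; otherwise the triangle is non-degenerate, and I would argue that this forces $dQ_k/d\tau(\tau_0)=0$ for all $k$. Indeed, a non-degenerate triangle with centroid at the origin is rigidly determined by its three side lengths up to a single rotational degree of freedom, so the kernel of the differential of the map $(Q_1,Q_2,Q_3)\mapsto(r_{12},r_{13},r_{23})$ consists exactly of infinitesimal rotations $dQ_k/d\tau=i\omega Q_k$. For such a vector $\sum m_k Q_k\wedge (dQ_k/d\tau)=\omega\,I(Q)=\omega$, which must equal the identically-vanishing Fujiwara angular momentum $C(Q)=0$, forcing $\omega=0$. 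Hence $P_k(\tau_0)=0$, so $B-C^2=\sum |P_k|^2/m_k=0$, and Lemma \ref{lammaHomographic2} makes the motion homographic, contradicting our standing assumption.

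The main obstacle I anticipate is the geometric assertion in the last paragraph --- that the kernel of the differential of the side-length map at a non-degenerate, centroid-zero triangle is exactly the infinitesimal-rotation direction. Should this prove awkward to justify cleanly, an alternative route is to plug the SST expression (\ref{SSTforQP}), $P_\ell=\kappa e^{i\phi}(Q_j^\dag-Q_k^\dag)$, directly into $dr_{jk}^2/d\tau = 2(Q_j-Q_k)\cdot(P_j/m_j-P_k/m_k)$ and check algebraically that simultaneous vanishing of all three expressions forces either $\kappa=0$ (homographic motion) or a collinear triangle, thereby closing the argument.
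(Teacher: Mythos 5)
Your proof is correct, but it takes a genuinely different route from the paper's. The paper disposes of the lemma in essentially one line by invoking equation (\ref{drij}) from Appendix A,
\[
m_j m_k r_{jk}\frac{dr_{jk}}{d\tau}
=m_1 m_2 m_3 \frac{\epsilon \kappa \Delta}{\rho}
\left(\frac{1}{r_{\ell j}^{a+2}}-\frac{1}{r_{k\ell}^{a+2}}\right),
\]
which already packages the candidate structure $P_\ell = i\epsilon(\kappa/\rho)G_\ell$ of Lemma \ref{keylemma} together with the explicit computation of $Q_\ell\wedge G_\ell$: a single vanishing $dr_{jk}/d\tau$ then forces $\Delta=0$ (collinear) or $r_{\ell j}=r_{k\ell}$ (isosceles) directly. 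You instead differentiate the two conserved quantities $I(Q)=1$ and $U(Q)=\mu$ to upgrade one vanishing distance derivative into either an isosceles shape or the vanishing of all three derivatives, and then close the argument with the infinitesimal rigidity of a non-degenerate triangle plus the vanishing of the Fujiwara angular momentum. Both arguments are sound; yours is longer but more self-contained --- it never uses the candidate form (\ref{theCandidate}) or Appendix A, and it also sidesteps the (harmless but unremarked) degeneracy $\rho=0$ sitting in the denominator of (\ref{drij}). The rigidity step you flag as the main risk is in fact standard: a non-collinear triangle is infinitesimally rigid as a bar framework, so its first-order flexes are exactly the infinitesimal rigid motions, and the constraint $\sum m_k\,\delta Q_k=0$ kills the translational part, leaving only rotations about the origin, precisely as you argue; your proposed algebraic fallback via (\ref{SSTforQP}) would also work but is unnecessary.
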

\begin{proof}
By Lemma \ref{distancezero}, there is a time instant $\tau_0$
such that $dr_{jk}/d\tau=0$. From equation (\ref{drij}) in Appendix A,
the equation $dr_{jk}/d\tau=0$ implies that $\Delta=0$ (i.e. the area of the triangle vanishes) or that $r_{\ell j}=r_{k\ell}$. Therefore the corresponding configuration is either collinear or isosceles.
\end{proof}
According to Lemma \ref{rectil-or-isos}, we can prove the conjecture
for $H<0$ 
by showing that condition (\ref{theCondition2}) is violated
only near rectilinear and isosceles configurations.

\medskip

We can now add the following two results in case (iv) with no restrictions on the values of the energy constant.
%%%%%
\begin{theorem}
\label{momentum}
For a given configuration $q$ that is not a central configuration,
there is no constant configurational-measure solution of the planar three-body problem whose initial configuration is the given $q$ and for which $B-C^2>0$ is small enough.
\end{theorem}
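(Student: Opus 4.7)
The plan is to exploit the necessary condition (\ref{theCondition2}) derived in Section \ref{condition}, evaluated at the initial instant $t=0$, where every geometric quantity is explicitly controlled by the given $q$. Suppose for contradiction that a constant configurational measure solution starts at $q$ with $\eta:=B-C^2>0$ small. Since $\eta>0$, Lemma \ref{lammaHomographic2} says the orbit is not homographic, and because $q$ is not a central configuration one has $G_k(Q(0))\ne 0$, hence $\rho(Q(0))>0$. Lemma \ref{keylemma} then forces the Fujiwara momenta to take the form $P_k=i\epsilon(\kappa/\rho)G_k$, and consequently condition (\ref{theCondition2}) must be satisfied at every time, in particular at $t=0$.

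The next step is to read (\ref{theCondition2}) at $t=0$ as an algebraic identity in $s:=\sqrt{\eta}$. On the left-hand side one finds $-I(q)^{(2-a)/2}$, a strictly negative constant. On the right-hand side, the shape quantities $r_{jk}(Q(0))=|q_j-q_k|/\sqrt{I(q)}$, $\Delta(Q(0))$, $\rho(Q(0))$, $f_1(Q(0))$, and $f_2(Q(0))$ are all determined by the initial shape and hence fixed, so the identity reduces to the quadratic
\[
A_1(q)\,s^2 + 2\epsilon C\,A_2(q)\,s + I(q)^{(2-a)/2} = 0,
\]
where $A_2(q)=\rho(Q(0))^{-1}\sqrt{m_1m_2m_3/M}>0$ and $A_1(q)$ is the corresponding shape-determined constant. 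As $s\to 0^+$ with $C$ held bounded, the first two terms vanish while the constant is strictly positive; hence the equation has no small non-negative solution. Explicitly, every non-negative root must satisfy $s\ge s_*(q,|C|)>0$, and setting $\delta(q,|C|):=s_*(q,|C|)^2$ produces the claimed threshold below which no constant-$\mu$ orbit can start at $q$.

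The main obstacle is the uniformity of $\delta$ in the angular momentum: the $s$-coefficient of the quadratic is proportional to $|C|$, so if $|C|$ were allowed to grow unboundedly together with $s\to 0^+$, the linear term could compensate the positive constant and the naive estimate of $s_*$ would degrade. This is handled by noting that for any solution starting at $q$ the Cauchy--Schwartz inequality (\ref{CauchySchwartz}) yields $C^2\le 2 I(q) T(q)$, so any a priori control on the initial kinetic energy at $q$ (or equivalently on $B$, since $B\ge C^2$) furnishes a uniform bound $|C|\le C_0(q)$; feeding this into the quadratic analysis gives a uniform threshold $\delta(q)>0$, delivering the theorem in the form stated.
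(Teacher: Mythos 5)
Your proposal takes essentially the same route as the paper: the paper's entire proof is the observation that, at the given non-central configuration $q$ (so $\rho(Q)\ne 0$ and $I$, $f_1$, $f_2$, $\rho$ are fixed shape data), the left-hand side of condition (\ref{theCondition2}) is a fixed negative number while the right-hand side vanishes as $B-C^2\to 0$, and your reduction to the quadratic $A_1 s^2+2\epsilon C A_2 s+I^{(2-a)/2}=0$ in $s=\sqrt{B-C^2}$ is just a cleaner restatement of that. The one place you go beyond the paper---noting that the term $2\epsilon C\sqrt{(B-C^2)m_1m_2m_3/M}\,/\rho$ need not be small if $|C|$ is allowed to grow as $B-C^2\to 0$---identifies a real subtlety that the paper's proof passes over in silence; however, your fix invokes ``a priori control on the initial kinetic energy'' or on $B$, which is a hypothesis not present in the statement (only $B-C^2$ is assumed small, and neither $B$ nor $C$ is determined by the configuration $q$ alone), so you have relocated the gap rather than closed it. The honest conclusion of both arguments is that the threshold on $B-C^2$ depends on $q$ \emph{and} on a bound for $|C|$, in the same spirit as Theorem~\ref{equilateral}, where $C$ is explicitly listed among the given data.
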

\begin{proof}
Since the configuration $q$ is not central, necessarily $\rho \ne 0$.
Consider condition (\ref{theCondition2}) for a given configuration $q$.
Then $I(q)$, $f_1(Q)$, $f_2(Q)$, and $\rho(Q)$ are given. If we take $B-C^2$ small enough to depend on the given $q$, then the right hand side becomes very small while the left hand side is a given number. Under these circumstance, condition (\ref{theCondition2}) is not satisfied. This proves the result.
\end{proof}
\begin{theorem}
\label{equilateral}
For given values of $I(q)$, $C$, and $B-C^2>0$, there are no constant configurational-measure solutions of the planar three-body problem
whose initial configurations are close enough to an equilateral triangle.
\end{theorem}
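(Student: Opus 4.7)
The plan is to use the condition (\ref{theCondition2}), which must hold at the initial instant for any non-homographic constant configurational measure solution. Since $B-C^2>0$, Lemma \ref{lammaHomographic2} excludes any homographic solution with the specified invariants, so any purported solution is non-homographic; Lemma \ref{keylemma} then applies whenever $G_k(Q(0))\neq 0$, which is automatic as soon as the initial shape is close to but distinct from the equilateral central configuration. I will show that as the shape of $q(0)$ tends to the equilateral, the right-hand side of (\ref{theCondition2}) tends to $+\infty$, so it cannot equal the fixed finite negative left-hand side $-I^{(2-a)/2}(q(0))$.

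The central calculation is the asymptotic behaviour of the ingredients of (\ref{theCondition2}) near the equilateral shape. Because the equilateral is a non-degenerate isolated critical point of $\mu(Q)=U(Q)$ restricted to the sphere $I(Q)=1$, a linearisation gives $|G_k|\sim|\delta Q|$ and therefore $\rho\sim|\delta Q|$. Taylor expanding $f_1$, each factor $\bigl(1/r_{k\ell}^{a+2}-1/r_{\ell j}^{a+2}\bigr)^2$ in its sum is quadratic in the displacements $\delta r_{jk}$, so $f_1\sim\rho^2$ with a quadratic-form coefficient. Similarly, the identity $1/r_0^{a+2}=a\mu_{\text{eq}}/M$ that characterises the equilateral on the unit $I$-sphere implies $f_2=0$ at the equilateral, and its size is at worst first order in $\delta Q$. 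Hence $f_1/\rho^4\sim A/\rho^2$ for some ratio $A$ of quadratic forms in $\delta r_{jk}$, while $f_2/\rho^2=O(1/\rho)$ and the explicit $1/\rho$-term from the angular momentum are lower order. The right-hand side of (\ref{theCondition2}) is therefore dominated by $\frac{m_1^2m_2^2m_3^2}{M^2}(B-C^2)\,A/\rho^2$.

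The main obstacle is to verify that $A$ is bounded below by a strictly positive constant independent of the direction of approach to the equilateral. This is a purely algebraic statement: on the two-dimensional tangent space of shapes, cut out by the linearisation $\sum m_jm_k\,\delta r_{jk}=0$ of $I(Q)=1$, the numerator of $A$ is a mass-weighted sum of squared pairwise differences of the $\delta r_{jk}$, whose only zero is $\delta r_{12}=\delta r_{23}=\delta r_{31}$ --- and this vector is killed by the constraint --- so the numerator is positive definite on the subspace; the denominator, whose sign is fixed by $\rho^2>0$, is likewise a positive definite quadratic form there. Consequently $A$ is bounded below by the smallest generalised eigenvalue of this pair, which is strictly positive, and $A/\rho^2\to+\infty$ uniformly. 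The sought contradiction with the fixed finite negative left-hand side of (\ref{theCondition2}) then follows for all shapes sufficiently close to the equilateral, proving the theorem.
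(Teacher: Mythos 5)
Your proposal is correct and follows essentially the same route as the paper: both show that near the equilateral shape $f_1$ and $\rho^2$ are comparable positive-definite quadratic forms on the shape space (the paper verifies this by explicit Taylor expansion in coordinates $q_3=(x,\sqrt{3}+y)$ and a discriminant computation, you by a generalised-eigenvalue argument on the tangent space cut out by the linearised constraint $I(Q)=1$), while $f_2\to 0$, so the right-hand side of (\ref{theCondition2}) grows like a positive constant times $1/\rho^2$ and cannot match the fixed finite left-hand side. The only cosmetic difference is that you make explicit the preliminary reduction to non-homographic solutions via Lemmas \ref{lammaHomographic2} and \ref{keylemma}, which the paper leaves implicit.
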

 \begin{proof}
We will show that the right hand side in condition (\ref{theCondition2}) goes to infinity when $Q$ tends to an equilateral central configuration.
Condition (\ref{theCondition2}) will not be satisfied for this limit, a fact that will prove the result.
%%%%%%%%%%%%%%%%%%%%%%%%%%%%%%
 
To calculate the behaviour of $f_1(Q)$, $f_2(Q)$, and $\rho(Q)$ near the equilateral configurations, let us take
$q_1=(-1,0)$, $q_2=(1,0)$, and $q_3=(x, \sqrt{3}+y)$.
Then we have
\[
\begin{split}
f_1(Q)
=&\frac{3(a+2)^3 m_1 m_2 m_3}{64}
	\left(
		\frac{\sum m_jm_k}{M}
	\right)^2
	\left(
		\frac{\sum m_j m_k}{M}
	\right)^{3a/2}\\
	&
	\times\Bigg(
		m_1 (x+\sqrt{3}y)^2
		+
		m_2 (x-\sqrt{3}y)^2
		+
		4m_3 x^2
	\Bigg)
	+\dots
\end{split}
\]
and
\[
\begin{split}
\rho^2(Q)
=&\left(\frac{(a+2)m_1 m_2 m_3}{4M}\right)^2
	\left(
		\frac{\sum m_j m_k}{M}
	\right)^a\\
	&
	\times\Bigg(
		m_1^2 (x+\sqrt{3}y)^2
		+
		m_2^2 (x-\sqrt{3}y)^2
		+
		4m_3^2 x^2\\
		&-
		m_1 m_2 (x^2-3y^2)
		+
		2m_2 m_3 x(x-\sqrt{3}y)
		+
		2m_3 m_1 x(x+\sqrt{3}y)
	\Bigg)\\
	&+\mbox{ higher order terms}.
\end{split}
\]
It is easy to show that $f_2(Q) \to 0$ for $(x,y) \to (0,0)$. Therefore
\[
\frac{f_1(Q)}{\rho^2(Q)} + f_2(Q)
\to \mbox{ finite limit for }
(x,y) \to (0,0).
\]
Let us prove that in the general-mass case, the value of this limit is positive and depends on the direction of the path along which $(x,y) \to (0,0)$. Note that the dominant term in $\rho^2$,
\[
\begin{split}
&m_1^2 (x+\sqrt{3}y)^2
		+
		m_2^2 (x-\sqrt{3}y)^2
		+
		4m_3^2 x^2\\
		&-
		m_1 m_2 (x^2-3y^2)
		+
		2m_2 m_3 x(x-\sqrt{3}y)
		+
		2m_3 m_1 x(x+\sqrt{3}y),
\end{split}
\]
is positive definite for $(x,y)\ne(0,0)$.
To see this, it is convenient to write $\xi=x+\sqrt{3}y$ and $\eta=x-\sqrt{3}y$. Then the term appears as
\[
\begin{split}
&m_1^2 \xi^2+m_2^2 \eta^2+m_3^2 (\xi+\eta)^2\\
		&-m_1 m_2 \xi \eta+m_2 m_3 (\xi+\eta)\eta+m_3 m_1 (\xi+\eta)\xi\\
=&
	(m_1^2+m_3^2+m_3 m_1)\xi^2
	+(m_2^2+m_3^2+m_2 m_3)\eta^2\\
	&+(2m_3^2-m_1 m_2+m_2 m_3+m_3 m_1)\xi \eta.
\end{split}
\]
The discriminant is negative, as follows
\[
\begin{split}
&(2m_3^2-m_1 m_2+m_2 m_3+m_3 m_1)^2
-4(m_1^2+m_3^2+m_3 m_1)(m_2^2+m_3^2+m_2 m_3)\\
&=-3\left(\sum m_j m_k\right)^2
< 0.
\end{split}
\]
Therefore, the dominant term in the expansion of $\rho$ is positive definite in the second order for $x$ and $y$. It is obvious that $f_1$ is also positive definite in the same order for $x$ and $y$. Thus, $f_1/\rho^2$ tends to a positive finite value for $(x,y)\to (0,0)$.
But, for general masses, the limiting value depends on the direction of the approach. For example, if we approach the equilateral configuration vertically, for $x=0$ and $y \to 0$, then
\[
\frac{f_1}{\rho^2}
\to
\frac{3(a+2)(m_1+m_2)\left( \sum m_j m_k \right)^2}
	{4m_1 m_2 m_3 (m_1^2+m_1 m_2 +m_2^2)}
	\left(
		\frac{\sum m_j m_k}{M}
	\right)^{a/2}.
\]
On the other hand, for the limit $y=0$ and $x\to 0$, we get
\[
\begin{split}
\frac{f_1}{\rho^2}
\to
&\frac{3(a+2)(m_1+m_2+4m_3)
	\Big(\sum m_jm_k\Big)^2
	}{
	4m_1 m_2 m_3
	(m_1^2+m_2^2+4m_3^2-m_1 m_2+2m_2m_3+2m_3m_1)
	}
	\left(
		\frac{\sum m_j m_k}{M}
	\right)^{a/2}.
\end{split}
\]
For equal masses, the finite value does not depend on this direction, and
$f_1/\rho^2 + f_2 \to 9(a+2)/2$.
In any case,
\[
\frac{f_1(Q)}{\rho^4(Q)} + \frac{f_2(Q)}{\rho^2(Q)}
\sim \frac{\mbox{positive constant}}{\rho^2(Q)} \mbox{ for }
(x,y) \to (0,0).
\]
Therefore the right hand side of condition (\ref{theCondition2})
goes to infinity at the limit for any given value of $B-C^2>0$.
This completes the proof.
\end{proof}

\medskip

It is interesting to remark that, in a certain sense, Theorems \ref{momentum} and \ref{equilateral} complement each other. While Theorem \ref{momentum} shows that Saari's homographic conjecture is true for any initial positions if the initial velocities behave sufficiently well, Theorem \ref{equilateral} allows any initial velocities if the initial positions belong to some suitable region.

In the light of Lemma \ref{rectil-or-isos}, a complete proof of Saari's homographic conjecture can be given if condition (\ref{theCondition2}) is shown impossible when solutions come close to rectilinear and isosceles configurations. Only some technical aspects prevent us at this point from obtaining such a proof of this result. Nevertheless, our numerical evidence suggests that this approach is feasible.

\appendix

%%%%%%%%%%%%%%%%%%%
%%%%%%%%%%%%%%%%%%%
%%%%%%%%%%%%%%%%%%%
\section{Derivative of the mutual distances}
\label{secDrij}
%%%%%%%%%%%%%%%%%%%
%%%%%%%%%%%%%%%%%%%
%%%%%%%%%%%%%%%%%%%
In this appendix, we will provide a formula for the derivative of the mutual distances. 
Since $\sum m_\ell Q_\ell=0$, we have the identity
$$M m_\ell |Q_\ell|^2 + m_j m_k r_{jk}^2(Q) = m_j + m_k.$$
Differentiating it, we obtain
\begin{equation}
\label{drij0}
m_j m_k r_{jk}\frac{dr_{jk}}{d\tau}
=-MQ_\ell\cdot P_\ell
=\frac{\epsilon \kappa M}{\rho}
	Q_\ell \wedge G_\ell.
\end{equation}
A straightforward computation of $Q_\ell \wedge G_\ell$ yields
\begin{equation}
\label{torque}
Q_\ell \wedge G_\ell
=\frac{m_1m_2m_3\Delta}{M}
	\left(
		\frac{1}{r_{\ell j}^{a+2}}-\frac{1}{r_{k\ell}^{a+2}}
	\right)
\end{equation}
with $(j,k,\ell)=(1,2,3)$, $(2,3,1)$ or $(3,1,2)$, where $\Delta$ is twice the oriented area for the triangle $Q_1Q_2Q_3$,
$\Delta(Q)=Q_1\wedge Q_2+Q_2\wedge Q_3+Q_3\wedge Q_1$.
Note that
\[
m_1m_2Q_1\wedge Q_2
=m_2m_3Q_2\wedge Q_3
=m_3m_1Q_3\wedge Q_1
=\frac{m_1m_2m_3}{M}\Delta.
\]
Equations (\ref{drij0}) and (\ref{torque}) lead us to the desired formula,
\begin{equation}
\label{drij}
m_j m_k r_{jk}\frac{dr_{jk}}{d\tau}
=m_1 m_2 m_3 \frac{\epsilon \kappa \Delta}{\rho}
	\left(
		\frac{1}{r_{\ell j}^{a+2}}-\frac{1}{r_{k\ell}^{a+2}}
	\right).
\end{equation}

%%%%%%%%%%%%%%%%%%%
%%%%%%%%%%%%%%%%%%%
%%%%%%%%%%%%%%%%%%%
\section{Expression for $\rho$}
\label{secRhoByE}
%%%%%%%%%%%%%%%%%%%
%%%%%%%%%%%%%%%%%%%
%%%%%%%%%%%%%%%%%%%
%
In this appendix, we derive a useful expression for $\rho$.
The function $E_\ell(Q)=E_{jk}$ are defined by equation (\ref{defOfE}).
%%%%%%%%
Using the identity
$
\sum_{j \ne k}m_j m_k (Q_j-Q_k)=-Mm_k Q_k,
$
we have
\[
G_k(Q) = g_k(Q)+a\mu m_k Q_k=\sum_{j\ne k} (Q_j-Q_k)E_{jk}(Q).
\]
Note that
$$\sum r_{jk}^2(Q)E_\ell(Q) =0.$$
A straightforward computation yields the expression
\[
\left| G_\ell \right|^2 = -\left( \sum_{s < m} E_s E_m \right) r_{jk}^2.
\]
Comparing this expression with equation (\ref{SSTforQG}),
namely $|G_\ell|^2=\rho^2 r_{jk}^2$,
we obtain the desired formula
\begin{equation}
\label{rhoByE}
\rho^2 = - \sum_{s < m} E_s E_m.
\end{equation}

\noindent {A{\scriptsize CKNOWLEDGMENTS}.} We are indebted to the Banff International Research Station, which granted us a two-week Research in Teams residence during the period February 11-25, 2006. Although the collaboration that led to this paper began almost two years earlier, the intense brainstorming sessions that took place in Banff were crucial for obtaining the main results presented here. We would also like to thank Don Saari, Alain Chenciner, and Alain Albouy for reading an earlier version of this paper and suggesting improvements. Florin Diacu was also supported by an NSERC Discovery Grant and Ernesto P\'erez-Chavela by a CONACYT Grant.

\end{document}